\documentclass[a4paper, 12pt]{article}
\usepackage[onehalfspacing]{setspace}
\usepackage{amssymb, soul}
\usepackage[utf8]{inputenc}
\usepackage{amsmath}
\usepackage{amsthm}
\usepackage{textcomp}
\usepackage{graphicx}
\usepackage{tikz}
\usepackage{ circuitikz }
\usetikzlibrary{positioning}
\usetikzlibrary{calc,intersections}
\usepackage{float}
\usepackage{bbm}
\usepackage{pgf}
\usepackage{natbib}
\usepackage{hyperref}
\usepackage{xcolor}
\usepackage{accents}
\usepackage{subfigure}
\usepackage{appendix}
\usepackage{enumitem}
\usetikzlibrary{decorations.pathreplacing,calligraphy}
\usetikzlibrary{decorations.text,calc,arrows.meta}
\usetikzlibrary{arrows,automata}
\usepackage{geometry}
\newtheorem{theorem}{Theorem}

\newtheorem{assumption}{Assumption}

\newtheorem{case*}{Case}
\newtheorem{claim}{Claim}

\newtheorem{corollary}{Corollary}

\newtheorem{definition}{Definition}

\newtheorem{lemma}{Lemma}

\newtheorem{proposition}{Proposition}

\title{Sequential Search with Planning\thanks{We are grateful to Deepak Kumar, Jason Lepore, Kevin Reffett, and Santanu Roy for their helpful suggestions at various stages of this work. The authors are listed in random order, following the proposal of \cite{rr18}.}}

\author{
Ruhi Sonal$^{\text{r}}$\thanks{Department of Social Sciences and Humanities, IIIT Delhi.}
\and Saptarshi Mukherjee$^{\text{r}}$\thanks{Department of Humanities and Social Sciences, IIT Delhi.}
\and Abhinaba Lahiri$^{\text{r}}$\thanks{Centre for Mathematical and Computational Economics, IIT Jodhpur.}
\and Aniruddha Ghosh$^{\text{r}}$\thanks{Orfalea College of Business, Cal Poly San Luis Obispo.}
}
\begin{document}
\maketitle                  
\begin{abstract}

\noindent Sequential development of a new product or technology, or natural resource exploration, often progresses through ordered stages with uncertain rewards and requires costly (ex-ante) planning to make future stages accessible. We model this process as an ordered Pandora's box problem where a decision-maker first chooses an initial scope, paying a cost that rises with the number of stages made accessible, and may later expand the scope at a marginal adjustment cost. Since the paid planning costs are sunk, the continuation values depend on the state variable “paid scope". We prove existence and uniqueness of scope-dependent reservation values, characterize the optimal search strategy as a threshold rule indexed by paid scope, and derive comparative statics. Interactions among three economic forces shape the optimal behavior -- a {\it{guarantee effect}} (a higher current best offer reduces the expected improvement from the next stage and induces earlier stopping), a {\it{paid-scope effect}} (a larger prepaid scope lowers the marginal cost of future access, raises the continuation value, and supports continuation at higher guarantees), and a {\it{remaining-horizon effect}} (fewer stages remaining shrink the option value of continuing).  Two examples illustrate how these forces generate distinct planning and search patterns under normal and fat-tailed rewards.

\medskip

\noindent{JEL classification:} D25, D81, D82, D83 

\medskip
\noindent{Keywords:}  Pandora's box, R\&D planning, search, information acquisition

	\end{abstract}
	
	\newpage

\tableofcontents
\newpage

\section{Introduction}

A new product or technology is often developed over multiple stages of
R\&D, where outcomes at each stage are uncertain. This uncertainty renders
the development process iterative \citep{eisenhardt} and often
time-consuming. Firms therefore plan their R\&D activities in advance for
several future periods, incurring ex-ante costs such as planning
expenditures and initial outlays required to initiate the program. A large
literature documents that the adoption and development of new technologies
is slow and incremental \citep{FARZIN1998779}. In such settings, rewards are inherently sequential as accessing later
stages of development typically requires completing earlier ones. For
example, the development of AI and large language models involves
successive stages of design, training, and evaluation, each building on
prior investments.\footnote{The decision of when to release a product
corresponds to a stopping decision wherein the firm may commercialize at an
intermediate stage, with ex ante uncertain rewards.} This raises a natural
economic question: how far ahead should a firm plan its exploration when
the outcomes are uncertain and access to future stages is costly? 

Further, exploration and search strategies are often adaptive. There are many examples in which an initial plan to develop a final product or technology was abandoned, or postponed for a significant period, during intermediate phases because an early-stage innovation had already generated substantial returns. For instance, 3M initially set out to develop a strong adhesive but instead monetized an unexpected intermediate-stage product: a low-adhesion prototype marketed as Press-n-Peel in 1977, which later became popular as Post-it Notes. Similarly, in the case of Dyson vacuum cleaners, the basic dual-cyclone prototype was already sufficiently promising that later, more ambitious stages were frozen. Conversely, product-development plans may be abandoned when the expected rewards no longer appear as attractive as the initial plan had suggested.

In an influential paper, \citet{Roberts1981-yd} study the sequential development of a project in which a firm invests funds in stages, observes interim outcomes and decides
whether to continue or abandon. In this setting, the firm learns progressively about a single terminal payoff. Their central result is a threshold rule that states that  continuation is optimal when the expected value of information from the
next stage exceeds its cost. In their framework, advancing to the next
stage always remains a free option. This is \textit{not} realistic in several economic settings, where firms must {\it{commit}} resources in advance to make future stages feasible -- consider
reserving infrastructure, securing regulatory approvals, or assembling
teams. These commitments determine the set of stages that can be accessed, if required. Expanding this set ex post is possible,
but typically more costly at the margin than committing to a broader plan
ex ante.\footnote{For instance, firms investing in large-scale AI systems
must secure data center capacity, training data, and safety infrastructure
prior to model development. Similarly, clinical trials require advance
commitments to recruitment and manufacturing. In both cases, expanding
scope mid-course is feasible but costly.} The dynamic decision-making in such cases requires forming a concrete plan for a horizon and investing accordingly, while leaving the option for altering the plan open, thereby making the decision process adaptive. 



We study a sequential search problem with costly planning where at each stage an option (interchangeably, alternative) offers an uncertain reward and where the sequence of the options is exogenously given. As evident and well established in the literature, exploring each option is costly and hence the optimal search is a natural problem. Without any planning costs, this problem is a variation of the Pandora's box problem in \citet{weitzman} with the restriction of the exogenously given sequence of the alternatives. This approach is similar to that in \citet{boodaghians}, which deals with the Pandora's box
problem of \citet{weitzman} when the options appear in the form of a \textit{tree}.\footnote{Options are said to appear in the form of a tree when the boxes are nodes of a tree, and can only be opened after their parents are opened. See \cite{boodaghians} for more details.} However, there is an important additional structure in our model. The decision-maker chooses an
initial {\it{planning horizon}}---referred to as \emph{scope}. Choosing a scope is costly and this cost is
increasing in its horizon. This scope cost reflects the initial investment for the projected search. As the DM searches, they have the option to alter the scope at an {\it{additional}} adjustment cost. Thus  the search is {\it{adaptive}}.\footnote{When scope and adjustment costs vanish, the model reduces
to the ordered-search benchmark of \citet{boodaghians}.}

This changes the optimal search rule in \textit{three} important ways.  First, the standard solution concept in optimal search problems is the {\it{reservation value}}, an amount that makes the decision maker indifferent to continuing the search. In our model, reservation values become
\emph{scope-dependent}. That is, at a given stage and realized history, two
decision-makers who differ only in their paid scope may optimally make
different continuation decisions. Second, planning and search decisions interact because current scope determines the cost of future access. A larger prepaid scope makes later stages cheaper to reach and therefore raises the \textit{continuation value}, that is the payoff that continuing the search is expected to generate. A smaller scope leaves more future access costs unpaid, which lowers the continuation value. This can induce the decision-maker to stop earlier. Third, the problem is no longer indexable in the usual sense -- a
single reservation value per stage is insufficient to characterize the optimal
behavior.\footnote{In standard formulations of the Pandora's box problem,
each box can be assigned a single reservation value that determines its
role in the optimal policy. In our setting, such a reduction is not
possible because the value of continuing depends on the endogenous cost of
accessing future stages, which in turn depends on past planning
decisions.}

\subsection{Optimal search strategy: scope-dependent reservation values}
We characterize the optimal search strategy in this setting and show that it takes the form of a
threshold rule {\it{indexed}} by the paid scope. More specifically, the decision-maker evaluates all possible scopes (or planning horizons) and selects the optimal scope to begin with. Corresponding to each such scope, there is a reservation value attached to each stage. The optimal strategy compares the default value (to begin with) and the  reservation value for the optimal scope to decide if to continue the search. The rule is adaptive and hence is fully flexible in terms of stopping decision as well as revising the scope at any stage. Our methodology of deriving the optimal strategy follows standard dynamic programming techniques. However,  our idea of adaptive scope in such search problems, as motivated through several examples, is modeled through the scope-dependent reservation values. This provides a parsimonious approach to model adaptive search behavior normally observed through the examples cited earlier, especially in 
R\&D activities or natural resource extractions. 

The scope-dependent reservation values that emerge from costly planning admit
a transparent economic interpretation. We identify three economic forces that shape the optimal stopping
rule. The first is a \emph{guarantee effect}--  the decision maker starts the search with a guaranteed amount (an endowment) which gets updated to the highest observed reward as the search progresses. A higher current guarantee shrinks the expected improvement from opening the next box and induces earlier
stopping. The second is a \emph{paid-scope effect}-- which is novel to our
setting. A larger prepaid scope lowers the marginal cost of accessing
future stages, raises the continuation value and supports continuation at
higher guarantees. The third is a \emph{remaining-horizon effect}-- as the
decision-maker advances through the sequence, the option value of continuing the search
shrinks because fewer future stages remain. The first and third forces
parallel the fall-back-value and deadline effects identified by
\citet{Leeli2022} in a different sequential-search environment. Most importantly, the
paid-scope effect has no counterpart in their setting or in the ordered-search
benchmark of \citet{boodaghians}, since both treat access to future boxes
as free. It is this second force, and its interaction with the other two,
that distinguishes our framework.

These three effects interact in economically meaningful ways. The paid scope effect attenuates the guarantee effect-- a broader prepaid scope raises the continuation value the most when the current guarantee is low; that is, precisely when the option value of future access matters the most. Submodularity of adjustment costs formalizes this complementarity and implies that a firm with broader current scope chooses (weakly) broader future scope. The remaining-horizon effect modulates the strength of the other two -- the paid-scope effect is the strongest early in the sequence, when many stages remain, while the guarantee effect dominates late, when scope differences matter less. As a result, the model generates both under-planning regimes, in which the firm begins with a narrow scope and expands adaptively; and over-planning regimes in which it commits broadly ex ante but stops early-- depending on primitives such as tail thickness and the curvature of planning costs. The examples in the paper illustrate these regimes and the distinct ways in which the forces interact.

In order to illustrate the essential idea of planning or implications of scope decisions, we start with a stylized example of search through an exogenously given sequence of options with costly planning. The example assumes a \textit{convex} scope cost and an incremental \textit{adjustment} cost for expanding scope.  Even in this simple setting, the example illustrates that the expected number of options searched is different from the initial optimal scope. This indicates that the initial planning might often change as the search process evolves. Thus, the example illustrates both the scope-dependence of reservation
values and the distinction between initial planning and realized search.

\subsection{An illustrative example}
\label{sec:example}
Consider a simple three-stage exploration problem in oil and
gas. Stage 1 is exploratory drilling, stage 2 is appraisal drilling,
conditional on the exploratory results; and stage 3 is the commitment to
development of infrastructure, conditional on the appraisal. Each stage
$i \in \{1,2,3\}$ yields a random reward $X_i$, interpreted as the present
value of recoverable reserves, equal to $H=100$ with probability
$p=1/2$ and $0$ otherwise, independently across stages. Inspecting each
stage costs $c=10$. The firm's outside option is $m_0=0$, and after
observing stages $1,\dots,i$, its guarantee is
$z_i=\max(m_0,X_1,\dots,X_i)$.

Before drilling begins, the firm chooses an initial scope
$j_0 \in \{1,2,3\}$ and incurs a scope cost $\phi(j_0)$, where
$\phi(0)=0$, $\phi(1)=10$, $\phi(2)=25$, and $\phi(3)=45$. The function
$\phi$ is increasing and convex. At stage $i$ with current scope $j$, the
firm can expand scope to $j'>j$ at cost
$\psi(i,j,j')=\phi(j'-i)-\phi(j-i)$, while  contraction of scope $j'\leq j$, is
costless.\footnote{The same structure can be used to describe a lab
developing a frontier AI model in stages---pretraining, post-training, and
deployment---with scope costs reflecting pre-committed compute, data, and
safety-evaluation capacity.} Suppose that the firm has pre-paid for the full scope $j=3$ and is at
stage $2$ with guarantee $z_2$. Continuing to stage 3 requires only the
inspection cost, so the continuation value is
\[
    -c+pH+(1-p)z_2 = 40+0.5z_2.
\]
Continuing is optimal if and only if $z_2<80$, yielding the reservation
value
\[
    r_2(3) = H-\frac{c}{p}=80.
\]

Next consider a firm that has pre-paid only for scope $j=2$. At stage 2,
accessing stage 3 requires expansion at cost $\psi(2,2,3)=10$, in
addition to the inspection cost. The continuation value becomes
\[
    -\psi-c+pH+(1-p)z_2 = 30+0.5z_2,
\]
so continuing is optimal if and only if $z_2<60$, yielding
\[
    r_2(2)=H-\frac{c+\psi}{p}=60.
\]

The expansion cost enters exactly as an additional inspection cost. As a
result, the two firms, identical in primitives and observed histories,
make different decisions. For example, if $z_2=70$, the firm with scope
$3$ continues, since $70<80$, while the firm with scope $2$ stops, since
$70>60$. The difference arises solely from the scope that has already been
paid for. The reservation value is therefore not dependent on the stage
alone, but also on scope as a state variable, and we write it as
$r_i(j)$.

Scope is {\it{not}} restricted to be an initial choice. A firm that begins with scope $2$ and
observes $X_1=0$ may expand to scope $3$ before stage 2, trading off the
cost of expansion against the benefit of a higher continuation threshold
at the next stage. The optimal policy therefore involves state-dependent
choices among continuing, expanding, contracting, and stopping. Scope also affects the initial planning decision. Working backward from
the stage-2 continuation values, the expected net payoffs from initial
scopes $1$, $2$, and $3$ are, respectively, $52.5$, $42.5$, and
$25$.\footnote{The calculation is as follows. At stage 2, the continuation
values at zero guarantee are $30$ under scope $2$ and $40$ under scope
$3$, since
\[
    -10-10+\frac{1}{2}\cdot 100 = 30,
    \qquad
    -10+\frac{1}{2}\cdot 100 = 40.
\]
Thus, after observing $X_1=0$, the continuation values at stage 1, where
$V(1,0,j)$ denotes the optimal continuation value after stage 1 with zero
guarantee and current paid scope $j$, are
\[
\begin{aligned}
    V(1,0,1)
    &= \max\left\{
        -10-10+\frac{1}{2}\cdot 100+\frac{1}{2}\cdot 30,\,
        -25-10+\frac{1}{2}\cdot 100+\frac{1}{2}\cdot 40
    \right\} = 45, \\
    V(1,0,2)
    &= \max\left\{
        -10+\frac{1}{2}\cdot 100+\frac{1}{2}\cdot 30,\,
        -15-10+\frac{1}{2}\cdot 100+\frac{1}{2}\cdot 40
    \right\} = 55, \\
    V(1,0,3)
    &= -10+\frac{1}{2}\cdot 100+\frac{1}{2}\cdot 40 = 60.
\end{aligned}
\]
Since the firm stops after stage 1 whenever $X_1=H$, the ex-ante payoff
from initial scope $j_0$ is
\[
    -\phi(j_0)-10+\frac{1}{2}\cdot 100
    +\frac{1}{2}V(1,0,j_0).
\]
Using $\phi(1)=10$, $\phi(2)=25$, and $\phi(3)=45$, this gives
$52.5$, $42.5$, and $25$ for $j_0=1,2,3$, respectively.}
Thus, the optimal initial scope is $j_0^*=1$. The firm commits at the
outset to the smallest possible scope, intending to expand if early
outcomes are unfavorable. Broader initial scopes are dominated in this
example: scope $3$ requires the firm to pre-pay $\phi(3)=45$ for stages
that it may never reach, while scope $2$ entails a higher upfront cost
without securing the full continuation value associated with scope $3$.

Under the optimal initial scope $j_0^*=1$, the firm inspects stage 1. If
$X_1=H$, it stops. If $X_1=0$, it expands to scope $2$ and inspects stage
2. If $X_2=H$, it stops; if $X_2=0$, it expands to scope $3$ and inspects
stage 3. Hence the stopping time $\tau^*$ takes values $1$, $2$, and $3$
with probabilities $1/2$, $1/4$, and $1/4$, respectively, such that the expected stopping time
$\mathbb{E}[\tau^*]=1.75$. The firm's expected stopping time exceeds its
initial scope, it pays initially for access to one stage but expects to
inspect more than one stage on average. 

This example illustrates an under-planning regime,
in which the firm rationally chooses a narrow initial scope and expands as
outcomes are realized. The difference between the initial optimal scope and expected search time (or number of options searched) is not surprising in this example, as the expansion of scope in the later stages comes at an {\it{additional}} cost.  But this example already illustrates the idea of optimal scope choice at any stage and adaptive decision-making. As we introduce an adjustment cost in the model, there would be a tension between choosing a longer scope and expected immediate outcomes. Section~\ref{sec:examples} shows that the opposite
regime, over-planning, where $\mathbb{E}[\tau^*]<j_0^*,$ also arises
under different primitives. We also derive comparative statics -- thresholds are
increasing in paid scope and decreasing in inspection and adjustment costs.
Thus, in contrast to the plan-neutral benchmark of \citet{weitzman} and
\citet{boodaghians}, the optimal stopping rule depends on past planning
decisions through the scope state.

\subsection{The economic effects and the example }
The calculations in the three-stage binary example make the interactions of the three economic forces transparent and exhibit why costly planning qualitatively changes search behavior. We recall that at stage $2$ with full prepaid scope ($j=3$), the continuation value is $40 + 0.5z_2$. The firm therefore continues if and only if $z_2 < 80$, which defines the scope-dependent reservation value $r_2(3) = 80$. When the firm has prepaid only up to scope $j=2$, accessing stage 3 requires an incremental adjustment cost of 10; the continuation value falls to $30 + 0.5z_2$ and the reservation value drops to $r_2(2) = 60$. Consequently, at an intermediate guarantee such as $z_2 = 70$ the firm with broader prepaid scope continues while the firm with narrower scope stops. This difference arises solely from the \emph{paid-scope effect} -- the firm that has already sunk the cost of access to stage $3$ faces a strictly lower marginal cost of continuation and is therefore willing to tolerate a higher current guarantee before stopping. The \emph{guarantee effect} is also visible -- had $z_2$ been $85$ instead of $70$, both firms would stop, because the current best reward already exceeds even the higher reservation value associated with the full scope. The remaining\emph{horizon effect} is implicit in the fact that only one stage remains after stage 2; the option value of continuation is therefore modest, which is why even the firm with $j=3$ stops once $z_2$ reaches 80.

These forces interact to generate the under-planning outcome observed in the example. Because $\phi$ is convex, committing ex ante to broad scope is expensive ($\phi(3)=45$ versus $\phi(1)=10$). The firm therefore optimally selects the narrowest initial scope $j_0^*=1$. After observing $X_1=0$ the realized guarantee is low, the guarantee effect is weak, and the paid-scope effect dominates -- the firm is willing to incur the expansion cost to raise its continuation threshold from $60$ to $80$. The same logic operates at stage 2 if $X_2=0$. The result is that the firm plans narrowly but expects to search $1.75$ stages on average — precisely the pattern of under-planning that costly scope planning naturally produces. In the absence of the paid-scope channel, i.e., if access to future stages were free or independent of planning history, as in \cite{weitzman} and \cite{boodaghians}, both firms would face the identical reservation value at stage $2$ and there would be no economic reason for an initial narrow plan followed by systematic expansion after poor early draws. The interaction of the three forces with the convexity of $\phi$ is what produces the realistic adaptive behavior documented in the example and, more generally, in R\&D and resource-exploration settings. Section $3$ shows that the opposite (over-planning) regime also arises under different primitives, again through the interplay of the same three forces.

\subsection{Summary of results}

We establish three main results in Section~\ref{sec:results}. First, the
optimal policy is a threshold rule indexed by the firm's paid scope. At
any state $(i,z,j)$, consisting of stage $i$, guarantee $z$, and paid
scope $j$, the firm stops if and only if $z \geq r_i(j)$ and continues
otherwise. Conditional on continuing, the firm chooses an optimal next
scope $j' \in \{i+1,\dots,n\}$, trading off the adjustment cost of
expansion against the option value of access to later stages. The
thresholds $r_i(j)$, which we call scope-dependent reservation values,
exist, are unique, and are finite under standard integrability and cost
assumptions.

Second, we derive comparative statics. The threshold $r_i(j)$ is weakly
increasing in paid scope: a firm with broader prepaid scope is willing to
continue at higher guarantees because future stages are cheaper to access.
The threshold is weakly decreasing in inspection and adjustment costs.
Under a submodularity condition on adjustment costs, the optimal next
scope is monotone in current scope, so a firm with broader current scope
chooses weakly broader future scope.

Third, the model nests the standard ordered-search benchmark when planning
costs vanish. Section~\ref{sec:block-adaptive} studies this benchmark
through an alternative formulation, which we call \emph{scope-adaptive
search}. In this formulation, payoffs are defined recursively over scopes of different lengths, with adaptive revision built into the payoff structure rather
than only into the policy. When planning is costless, the scope-adaptive
payoff coincides with the dynamic-programming value function, and the
reservation values under the two formulations are identical. With nonzero
scope and adjustment costs, paid scope becomes a state variable, and the
scope-dependent thresholds generally differ from the plan-neutral dynamic
programming benchmark.

Section~\ref{sec:examples} works out two additional cases. The first example (Section~\ref{sec:normal}) studies normally distributed
rewards with convex scope and adjustment costs. It provides sufficient
conditions under which the firm never expands on the optimal path. Under
these conditions, the expected stopping time is bounded above by the
initial scope, $\mathbb{E}[\tau^*]\leq j_0^*$, which we interpret as an
over-planning regime. The second example (Section~\ref{sec:tails}) compares thin-tailed
(exponential) and heavy-tailed (Pareto) reward distributions matched on
mean. The cost of incomplete scope, measured by the gap $r_2(3)-r_2(2)$,
is constant under exponential rewards but becomes large under sufficiently
heavy-tailed Pareto rewards. Heavy-tailed reward distributions therefore
generate an endogenous force toward broader initial planning where the option
to access future stages is more valuable when every stage carries a
meaningful chance of a large breakthrough.

\subsection{Related Literature}

The literature on search is expansive. While there are several papers focusing on the classical model by \cite{weitzman}, the literature lately has moved towards applied models with realistic features including pre-ordered options (\cite{choi}). While \cite{boodaghians} discusses the optimal search strategy in a more generalized ordered setting, the paper does not specifically delve into economic applications. We aim to model sequential search with pre-ordered options where planning plays a very crucial role. We explore the implications of costly planning and the implications of choosing a horizon for such planning. One of the most important aspects of our model is that expansion of horizon always remains feasible at an additional (adjustment) cost. We analyze subsequent adaptive behavior that results from optimal search strategy in such a setting. Our approach aims to fill an important gap in the existing literature and it is likely to provide insights and modeling tools not only in individual search behavior in managerial operations and decision-making but also in understanding equilibrium behavior of competing firms where planning for horizon or stages remains a key strategic variable.

We find several areas of applications for our setting. We broadly categorize the related papers into the following three areas:

\subsubsection{Search models, Weitzman's rule and other variants}
\begin{enumerate}
    \item[(i)] \citet{Roberts1981-yd} studies two variants of sequential
    development projects. Their two-sided sequential development project,
    in which the DM can adopt the project at any intermediate stage based
    on the current value estimate, shares the stop-or-continue structure
    of our model. Their optimal policy is a stage-dependent threshold rule,
    in which the DM continues when the current estimate falls in an
    intermediate range. The continuation threshold becomes $r_i(j)$, and it depends
    not only on the stage but on the DM's planning history through the
    scope state $j$. Roberts and
    Weitzman characterize when to stop given that all stages are freely
    accessible. We characterize when to stop, when to expand, and how much
    to commit to in advance, given that access itself is costly.

    \item[(ii)] \citet{doval} extends \citet{weitzman} by allowing the
    agent to take any uninspected box without paying its inspection cost.
    This introduces a backup value $x_i^B$ alongside the reservation value
    $x_i^R$, where below $x_i^B$ the box is taken without inspection,
    above $x_i^R$ it is ignored, and in between it is inspected. The same
    box may receive different treatment depending on the history of
    realized prizes. We
    replace the pair $(x_i^B,x_i^R)$ with a family of thresholds $r_i(j)$
    indexed by the scope state $j$, where the DM's planning history
    determines which threshold applies. \cite{doval} enriches what the DM can do at the moment of
    stopping, whereas we enrich what the DM must pay for in advance to be eligible to search at all. 

\item[(iii)] \cite{Leeli2022} discusses a very important aspect of sequential search, that is, choice of intensity of search and adaptive behavior when there is a deadline. The paper characterizes optimal search strategy and provides comparative statics results with respect to change in default values (fall-back value effect) and its interactions with deadline effect. The fall-back-value and deadline effects identified by \citet{Leeli2022} in a different sequential-search environment parallel two of the forces in our model. Most importantly, the paid-scope effect has no counterpart in their setting, since both treat access to future boxes as free. Their deadline is a hard wall in calendar time; ours is mechanical in the stage index, with each inspection consuming one of a fixed number of stages.
\end{enumerate}

\subsubsection{Pre-ordered search}
There is a growing literature on sequential search in richer environments that feature pre-ordered alternatives, endogenous horizons, or adaptive scope. In \cite{boodaghians}, the generalized reservation value
    $z_i$ for each stage on a line is determined by the costs and
    distributions of all downstream boxes. It is a fixed number,
    independent of the search history. In our model, the reservation value
    $r_i(j)$ at stage $i$ depends on the current scope $j$, which is itself
    the outcome of prior planning decisions. The interval $[r_i(i),r_i(n)]$
    measures the extent to which the planning margin affects the search
    decision at stage $i$. When this interval is degenerate, scope is
    irrelevant and the model reduces to \citet{boodaghians}. We add the
    dimension of paid eligibility and show that the firm's planning history
    becomes a state variable that indexes a family of reservation values
    rather than a single one.

 \cite{choi} develops a tractable extension of \cite{weitzman}'s Pandora’s-box framework to sequential search among pre-ordered options, with applications to web and informed search. They analyze non-stationary payoffs, recall, and option value while preserving indexability. \cite{mojir} nests a finite-horizon store-by-store search problem inside an infinite-horizon temporal search problem, highlighting interactions across different layers of horizon choice. \cite{pavan} studies sequential search in which agents endogenously select their consideration sets, thereby treating the scope of search as an active choice variable. \cite{Leeli2022} examine adaptive search intensity under deadlines and derive fall-back-value and deadline effects that parallel two of the forces that we also identify in our model. These papers enrich the ordered-search benchmark of \cite{boodaghians} by allowing more flexible adaptation over horizons and sets of alternatives. However, they do not introduce costly ex-ante planning that renders paid scope a persistent state variable. As a result, continuation values and reservation thresholds remain independent of planning history, and the distinctive paid-scope effect that arises in our setting is absent. 

\subsubsection{Industrial organization}

In the industrial-organization literature, firms competing in sequential technological development routinely choose both the timing and the scope of their R\&D investments. Classic multi-stage patent-race models (\cite{REINGANUM1989849}; \cite{harrisvickers}) analyze strategic effort and leadership dynamics along a technological path whose length is effectively chosen by participants. Step-by-step innovation frameworks (\cite{aghion2001}) emphasize how technological distance between rivals shapes innovation incentives, while the design of patent scope and breadth in cumulative innovation settings directly affects the returns to sequential advances (\cite{Klemperer}). These contributions highlight the strategic importance of horizon and scope choices under competition, yet they treat access to future stages as free or independent of prior planning commitments. By introducing costly ex-ante planning, our model makes paid scope a state variable that shapes continuation values even for decision makers who share the same stage and current guarantee. This generates scope-dependent reservation values $r_i(j)$ and the novel paid-scope effect that moderates the guarantee effect while interacting with the remaining-horizon effect to produce both under-planning and over-planning regimes—features that are new to both the ordered-search and multi-stage R\&D literatures.

The paper is organized as follows. Section~\ref{sec:model} introduces the
formal model, including the firm's outside option and the three cost
components: inspection, scope, and adjustment costs.
Section~\ref{sec:results} states the main results, which establish
existence and uniqueness of scope-dependent reservation values, the
threshold characterization of the optimal policy, and the comparative
statics. Section~\ref{sec:block-adaptive} develops the scope-adaptive
formulation of the costless-planning benchmark and establishes its
equivalence with dynamic programming. Section~\ref{sec:examples} works out
two illustrative examples with
Section~\ref{sec:normal}
covering the normal-rewards specification with convex adjustment costs and
Section~\ref{sec:tails} comparing thin- and heavy-tailed reward
distributions. Section~\ref{sec:conclusion} concludes. Proofs are collected in the Appendix.
\section{The Model}
\label{sec:model}

\noindent Consider a finite set of $n$ boxes that a DM can plan and explore in a fixed sequence. We index the boxes as $ \mathcal N = \{1, 2, \dots, n\},$ where $i\in \mathcal{N}$ denotes the position of the box in the sequence. Each box $i$ contains a random reward $X_i\sim F_i$ and opening it costs $c_i \geq 0$. The DM plans to search the boxes in the given sequence and cannot skip a box. That is, in order to observe $X_i$, all preceding boxes $1,2,\ldots ,i-1$ must be observed. The DM is risk-neutral and maximizes expected payoff. We assume that during the search, irrespective of the initial plan, the search can be terminated by the DM at any $i\leq n$.  Let $m \in \mathbb{R}$ denote an amount that is guaranteed. Before the search begins, the DM has an initial endowment $m_0,$ i.e., $m$ is $m_0$ initially. The amount $m$ is then updated with each observed reward: after box $i$ is observed, $m_{i}=\max\{m_{i-1},x_{i}\}.$\medskip

\noindent We assume that for each $i\in \mathcal{N}$, the distribution \(F_i\) is known to the DM, and the random variables \(X_i\) are independent across \(i \in \mathcal N \).\footnote{Allowing for correlation across $X_{i}$ would complicate the dynamic problem by introducing informational spillovers across boxes. Our focus on independence isolates the role of reservation values; extensions to correlated environments are left for future work.} For each \(i \in \mathcal N \), the distribution \(F_i\) is defined on \(\mathbb{R}\), thereby allowing for both positive and negative payoffs, although in many applications rewards are typically non-negative. The cumulative distribution function \(F_i(x) = \mathbb{P}(X_i \leq x)\) is non-decreasing and right-continuous, and satisfies \(F_i(-\infty) = 0\) and \(F_i(\infty) = 1\).\medskip

\noindent We introduce the following assumptions.

\begin{assumption}[Integrability]\label{ass:integrability}
For each $i\in \mathcal N $, $E\left[|X_i|\right] < \infty$. Consequently, all finite maxima of $X_i$ have finite expectation.
\end{assumption}
\noindent Integrability ensures that expectations involving $X_i$, such as $E\left[\max(m_{i-1}, X_i)\right]$, 
are well-defined and finite. This assumption is also important for our dynamic programming approach: for instance, for any box $i$, denote by $V_{i+1}(\max(m,X_i))$ the value function for the dynamic program that solves the optimal search problem. Assumption \ref{ass:integrability} then ensures that $E\left[V_{i+1}(\max(m_{i-1}, X_i))\right]$ is well-defined and enables the application of standard results such as the dominated convergence theorem (DCT) and Fubini's theorem.
\begin{assumption}[Inspection costs]\label{ass:strictcost}
For each $i\in \mathcal N $, the cost of inspection satisfies $0<c_i<\infty$.
\end{assumption}

\noindent Assumption \ref{ass:strictcost} requires that inspection is strictly costly for every box. If inspection were costless for a box, the DM would always open it. Positive inspection costs make the stopping decision non-trivial and are central to many economic applications. In particular, when the current guarantee $m$ is sufficiently large, the DM may optimally choose not to continue searching. 

The formulation and revision of plans often entails costs. The DM may invest in creating a plan at the beginning of the search process and revision during the search may also require additional cost. For instance, \cite{Roberts1981-yd} discusses investment in a seismograph test prior to undertaking oil exploration in an area. Such expenses provide information about expected rewards from the search and may influence the search decisions. Further, during the search process, a change of plans i.e., adjusting or revising the planned scope may also be costly. In order to allow for these possibilities, we generalize the above model by introducing a \textit{scope cost} $\phi$, which captures the cost of ex-ante planning and an \textit{adjustment cost} $\psi$, which is incurred if a planned scope is revised during the search. 

The interaction between inspection costs and scope costs is central to the planning problem we study. The scope cost enables the DM to determine how far ahead in the sequence they can search. Specifically, before the search begins, the DM chooses an initial scope $j_0 \in \{1,\ldots,n\}$ and pays a cost $\phi(j_0)$. This grants access to the boxes $\{X_1,\ldots ,X_{j_0}\}$, $j_0\in \{1,2,\ldots ,n\}$. At stage $i$ in the search, the DM may revise the scope to any $j\geq i$.
 
\begin{assumption} [Scope costs]
We assume that the scope cost function $\phi:\{0,1,\ldots,n\}\to\mathbb{R}_+$ satisfies the following properties: {\normalfont(a)} $\phi(0)=0$ and {\normalfont(b)} $\phi$ is strictly increasing i.e. $\phi(s+1)>\phi(s)$ for every $s=0,\ldots,n-1$.
\end{assumption}

\noindent Suppose that the DM has already paid $\phi(j)$ when they arrive at stage $i$. We refer to $j$ as the \emph{current scope}: the DM has paid to keep open the possibility of searching through stage $j$. Boxes beyond $j$ are not directly accessible but can be made accessible through the expansion of the scope by paying an adjustment cost, $\psi.$ In particular, if the current accessible set is $\{X_1,\ldots,X_j\}$, the DM can expand the scope to $\{1,\ldots,j'\}$ with $j' > j$ by incurring an additional \textit{adjustment} cost $\psi(i, j,j')$.  

\begin{assumption}[Adjustment costs]
\label{ass:adjustment}
We assume that at stage $i$ with current scope $j$, if the DM chooses a new scope
$j' \in \{i+1,\ldots,n\}$, the adjustment cost is given by $\psi(i, j, j^{'})$.  Adjustment cost is $0$ when $j^{'}\le j$  and is positive if $j^{'}> j.$
\end{assumption}

The search proceeds as follows -- the DM selects an initial planned scope $j_0$ and initiates the search if $j_0\geq 1$. In order to
open the $i$th box the DM must (i) observe all the preceding boxes $\{X_1, X_2, \ldots, X_{i-1}\}$; and (ii) pay the inspection cost $c_i$. After observing $X_i$, she decides whether to terminate the search, or to continue. If the DM decides to continue, they may choose a revised scope. We index scope choices by the stage at which they are made i.e. $j_0$ denotes the initial scope, chosen before any box is
opened, and $j_i$ for $i \geq 1$ denotes the scope chosen at the stage-$i$ (after observing $X_i$, conditional on continuing).  A sequential search strategy $\pi$ specifies  $j_0$, an action (stop/continue) at each stage $i$ and if the DM continues, a revised scope $j_i$. Since the DM can revise the scope at any stage, the strategy is \textit{adaptive}. Let $\Pi$ be the set of all such adaptive strategies. Formally, a sequential search strategy $\pi\in \Pi$ specifies:
\begin{enumerate}
    \item[(i)] an initial scope $j_0 \in \{1, \ldots, n\}$, with cost
    $\phi(j_0)$ paid before any box is opened;
    \item[(ii)] at each stage $i \in \{1, \ldots, n-1\}$, given the current
    scope $j_{i-1}$ and the current best reward
    $y_i = \max(m_0, x_1, \ldots, x_i)$, whether to
    \begin{enumerate}
        \vspace{-1em}
        \item \normalfont{Stop:} accepting $y_i$ and terminate the search; or 
        \item \normalfont{Continue:} choose a new scope
        $j_i \in \{i+1, \ldots, n\}$, pay the adjustment cost
        $\psi(i, j_{i-1}, j_i)$ and the inspection cost $c_{i+1}$ to observe $X_{i+1}$.
    \end{enumerate}
\end{enumerate}

Each
$\pi \in \Pi$ induces a stochastic stopping time $\tau^{\pi}$ and a
stochastic sequence of scope choices $(j_0, j_1, \ldots, j_{\tau^{\pi}-1})$,
both adapted to the sequence of realized values. Let $\mathcal{X}(\pi) = \{1, 2, \ldots, \tau^{\pi}\}$ denote
the set observed under $\pi$, with $\mathcal{X}(\pi) = \emptyset$ if the DM
stops at the outset. 
\noindent The DM chooses $\pi$ to maximize
\begin{equation}
\pi^* \in \arg\max_{\pi \in \Pi}
\mathbb{E}\!\left[
    \max\!\left\{m_0, \max_{i \in \mathcal{X}(\pi)} X_i\right\}
    \;-\; \sum_{i \in \mathcal{X}(\pi)} c_i
    \;-\; \phi(j_0)
    \;-\; \sum_{i=1}^{\tau^{\pi}-1} \psi(i, j_{i-1}, j_i)
\right]
\end{equation}
with the convention that if $\mathcal{X}(\pi) = \emptyset$, the reward equals $m_0$
and no inspection or adjustment costs are incurred. The total planning expenditure under $\pi$ is
$\phi(j_0) + \sum_{i=1}^{\tau^{\pi}-1} \psi(i, j_{i-1}, j_i)$, with the
sum empty if $\tau^{\pi} \leq 1$.  

\noindent We define the DM's continuation value  recursively with the scope $j$ as an explicit state variable. The problem admits a DP formulation with state $(i,z,j)$, where $i$ is the current stage, $z$ is the current guarantee, and $j$ is the current paid scope. The value is evaluated after observing stage $i$ and updating the guarantee to $z=m_i$.

\begin{definition}[Sequential search dynamic program (SSDP)]
\label{def:bellman}
Let $V(i,z,j)$ denote the value after observing stage $i$ and updating the guarantee to $z$.  At the initial stage $i=0$,
\begin{equation}
\label{eq:dp}
    V_0(m_0) = \max\!\left\{m_0,\;
    \sup_{j_0 \in \{1,\ldots,n\}} \left[-\phi(j_0) - c_1
    + \mathbb{E}\!\left[V(1, \max(m_0, X_1), j_0)
    \right]\right]\right\}.
\end{equation}
For $i<n$,
\begin{equation}
\label{eq:dp}
V(i,z,j)
=
\max\!\left\{
z,\;
\sup_{j' \in \{i+1,\ldots,n\}}
\left[
-\psi(i,j,j') - c_{i+1}
+ \mathbb{E}\!\left[V(i+1,\max(z,X_{i+1}),j')\right]
\right]
\right\}
\end{equation}
For $i=n$, the value function is $V(n,z,j)=z$ \text{for all feasible }j.
\end{definition}

\noindent We refer to an SSDP where Assumptions \ref{ass:integrability}-\ref{ass:adjustment} hold as a \textit{sequential search problem} (SSP) for brevity. 

\subsection{Optimal sequential search strategy}
\label{sec:results}
\noindent In the standard Pandora’s Box problem in \cite{weitzman}, there is no exogenously given order: the DM chooses the sequence in which boxes are searched. The optimal search policy is characterized by a reservation value $\zeta_i$ for each box $i$, and the optimal sequence is given by the decreasing order of $\zeta_i$. The reservation value $\zeta_i$ is defined as the amount that makes the DM indifferent between stopping and opening box $i$, and depends only on the distribution $F_i$ and the inspection cost $c_i$. A key feature of this formulation is that $\zeta_i$ is independent of the previously observed rewards as well as the set of remaining boxes.  In our model, the key solution concept of deciding whether to stop or continue the search at any stage based on the reservation values is preserved. However, since the boxes can be opened only in a given sequence and the DM incurs costs to make future boxes accessible in addition to the inspection cost, the reservation values are no longer independent of the sequence and the selected scope. The reservation values will depend on the current scope $j$ and their calculation would consider both (i) the sequential structure of the problem and (ii) planning and adjustment costs in addition to $c_i$.

 \noindent In contrast to both \cite{weitzman} and \cite{boodaghians}, reservation values in our model are no longer state-independent. Since access to future boxes is endogenous and costly, continuation decisions depend on previously chosen scope. This leads to a family of \emph{scope-dependent reservation values}.

\begin{definition}[Scope-dependent reservation value]
\label{def:reservation}
For each stage $i<n$ and feasible current scope $j\in\{i,\ldots,n\}$, the scope-dependent reservation value $r_i(j)$ is defined as the unique solution to
\begin{equation}
\label{eq:reservation}
    V(i, r_i(j), j) = r_i(j).
\end{equation}
Equivalently, $r_i(j)$ is the amount at which the DM is indifferent at stage $i$ between stopping and continuing optimally, given current scope $j$.
\end{definition}

 \begin{definition}[Optimal strategy]
A strategy $\pi$ is optimal if, at every feasible state $(i,z,j)$, it attains the value function $V(i,z,j)$. That is, at each state, $\pi$ prescribes either stopping, which yields payoff $z$, or continuing with a scope choice $j' \in \{i+1,\ldots,n\}$ such that $j'$ attains
\[
\max_{j' \in \{i+1,\ldots,n\}}
\left[
-\psi(i,j,j') - c_{i+1}
+ \mathbb{E}\!\left[V(i+1,\max(z,X_{i+1}),j')\right]
\right].
\]
Note that $i=0$ and $j'=j_0$ in the above when the search begins.
\end{definition}

\noindent At any stage $i < n$, given current guarantee $z$ and scope $j$, the DM chooses between stopping and continuing. If the DM stops, the payoff is the current guarantee $z$. If the DM continues, a new scope $j' \in \{i+1,\ldots,n\}$ is chosen with the adjustment cost $\psi(i,j,j')$ and the inspection cost $c_{i+1}$ is paid to observe $X_{i+1}$. The guarantee updates to $\max(z, X_{i+1})$, and the continuation value is given by $V(i+1, \max(z, X_{i+1}), j')$. The DM selects the scope $j'$ to maximize expected continuation value net of costs. This recursive structure yields the Bellman equation in Definition \ref{def:bellman}. \medskip

\begin{theorem}\label{thm: optimalstrategy}
An optimal strategy $\pi^*\in \Pi$ exists for an SSP. 
\end{theorem}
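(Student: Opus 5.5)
The proof is by backward induction on the stage index $i$, using the finite horizon $n$ and the finiteness of the scope set. We show jointly that the functions $V(i,\cdot,j)$ defined in Definition~\ref{def:bellman} are well defined, finite, and measurable. We also show that every supremum in the Bellman equation is a maximum, so that a measurable selector exists. We then build $\pi^*$ from these selectors and verify that it attains $V_0(m_0)$, and that no strategy in $\Pi$ does better.

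\textbf{Step 1 (regularity of the value function).} We prove by downward induction on $i=n,n-1,\dots,1$ that, for each feasible $j$, the map $z\mapsto V(i,z,j)$ satisfies three properties: it is nondecreasing, it is $1$-Lipschitz (in particular continuous, hence Borel), and it obeys the bound $z\le V(i,z,j)\le |z|+\sum_{k>i}E|X_k|$. The base case $V(n,z,j)=z$ is immediate.

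For the inductive step, fix $j'$. The map $z\mapsto \max(z,X_{i+1})$ is nondecreasing and $1$-Lipschitz pointwise, and composition preserves both properties. The bound together with Assumption~\ref{ass:integrability} gives an integrable dominating function, namely $|z|+|X_{i+1}|+\sum_{k>i+1}E|X_k|$. Hence $E[V(i+1,\max(z,X_{i+1}),j')]$ is finite, and it inherits monotonicity and the Lipschitz property; dominated convergence can be used for continuity if needed.

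The supremum over $j'\in\{i+1,\dots,n\}$ runs over a finite set, so it is a maximum of finitely many such functions, each shifted by the finite constants $-\psi(i,j,j')-c_{i+1}$. Taking a further maximum with $z$ preserves all three properties, and the upper bound propagates since the costs are nonnegative. The same argument at $i=0$, with a finite maximum over $j_0$, shows that $V_0(m_0)$ is finite.

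\textbf{Step 2 (construction of $\pi^*$).} At each state $(i,z,j)$, the policy stops if $z\ge \max_{j'}[\cdots]$. Otherwise it continues with the smallest maximizing $j'$. Ties broken by smallest index give a Borel selector, because each candidate's value is continuous in $z$, so the sets where a given $j'$ is optimal are Borel. The initial scope $j_0$ is chosen the same way, and the search stops at the outset if $m_0$ is optimal. By construction, this strategy attains the maximum in the Bellman equation at every state; this is exactly the notion of an optimal strategy defined above.

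\textbf{Step 3 (verification).} This is the main point needing care. We must show that the payoff of $\pi^*$ equals $V_0(m_0)$, and that for every $\pi\in\Pi$ the objective is at most $V_0(m_0)$.

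Both claims follow by backward induction over the finite horizon. For an arbitrary $\pi$, condition on the history up to stage $i$. At each decision node, the expected payoff of $\pi$ from that node onward is bounded by either $z$ (if $\pi$ stops) or $-\psi-c_{i+1}+E[V(i+1,\cdot,j')]$ (if it continues), using the inductive bound at stage $i+1$ and the independence of the $X_k$. The latter is at most $V(i,z,j)$. Integrability from Step 1 justifies the interchange of expectation and conditioning (Fubini/tower property). For $\pi^*$, the same recursion holds with equality, since it selects the maximizers.

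Since there are only $n$ stages, no limiting argument or transversality condition is needed. The only genuine obstacles are the integrability and measurability bookkeeping handled in Step 1; once those are in place, the rest is standard finite-horizon dynamic programming. This establishes existence of $\pi^*$.
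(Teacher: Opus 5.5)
Your proposal is correct and follows essentially the same route as the paper: backward induction over the finite horizon, with the finiteness of the scope set $\{i+1,\ldots,n\}$ guaranteeing that the Bellman maximum is attained at every state $(i,z,j)$. Beyond the paper's argument, you also supply the integrability bound, the Borel measurability of the tie-breaking selector, and the verification that the Bellman-maximizing policy does at least as well as every $\pi\in\Pi$, all of which the paper leaves implicit.
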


\noindent The result follows from a standard backward induction argument for finite-horizon dynamic programs. The assumptions ensure that continuation values are well-defined and that the maximization over scope choices is attained at each stage.  

 \begin{definition}[Planned sequential search strategy (PS)]\label{defn: optimal strategy}
\noindent Before the search begins, the initial scope $j_0$ is chosen according to
\[
j_0(z) \in \arg\max_{j \in \{1,\ldots,n\}}
\left[-c_1 - \phi(j) + \mathbb{E}\!\left[V(1, \max(z, X_1), j)\right]\right].
\]

\noindent At any stage $i \in \{0,1,\ldots,n\}$ with current scope $j$ and guarantee $z$:
\begin{enumerate}
    \item[(i)] The DM stops if and only if $z \ge r_i(j)$.
    \item[(ii)] If $z < r_i(j)$, the DM continues, and the optimal next scope is given by
    \[
        j'_i(z, j) \in \arg\max_{j' \in \{i+1,\ldots,n\}}
        \left[
        -\psi(i,j,j') - c_{i+1}
        + \mathbb{E}\!\left[V(i+1, \max(z, X_{i+1}), j')\right]
        \right].
    \]
\end{enumerate}

\end{definition}

\noindent Unlike the standard Pandora’s Box problem, where reservation values depend only on the box, our thresholds $r_i(j)$ depend on the state variable $j$ (the current paid scope). This introduces a complication: the DM’s stopping rule changes depending on how much scope has been prepaid. A DM with a broader prepaid scope continues longer (i.e., has a higher $r_i(j)$), because future stages are cheaper to access. Thus, the same box may have different thresholds depending on the paid scope in addition to its position in the sequence. This contrasts with \cite{boodaghians}, where reservation values depend on the downstream structure but remain independent of past decisions.

\begin{theorem}[Existence and uniqueness]
\label{thm:reservation-exists}
In an SSP, for every stage $i$ and every feasible scope $j$, the scope-dependent reservation value $r_i(j)$ exists, is finite, and is unique.
\end{theorem}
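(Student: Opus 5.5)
The plan is to define, for fixed $i<n$ and feasible $j$, the continuation function
\[
C_i(z,j)=\max_{j'\in\{i+1,\ldots,n\}}\Bigl[-\psi(i,j,j')-c_{i+1}+\mathbb{E}\bigl[V(i+1,\max(z,X_{i+1}),j')\bigr]\Bigr],
\]
so that $V(i,z,j)=\max\{z,C_i(z,j)\}$. A point $r$ satisfies $V(i,r,j)=r$ exactly when $C_i(r,j)\le r$. The paper's own phrasing is that $r_i(j)$ is the point of indifference, so I will take $r_i(j)$ to be the unique root of $g(z):=C_i(z,j)-z$. The whole proof then reduces to showing that $g$ is continuous and strictly decreasing, and that it is positive for $z$ very negative and negative for $z$ very large.

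The first step, by backward induction on $i$ from $n$ down, is to establish the following properties of $V(i,\cdot,j)$ for every feasible $j$: it is finite, nondecreasing, and $1$-Lipschitz in $z$, with $0\le V(i,z',j)-V(i,z,j)\le z'-z$ for $z\le z'$; and it satisfies the bound $z\le V(i,z,j)\le z+K_i$ for some constant $K_i<\infty$. At $i=n$ we have $V=z$, so the claim is trivial. For the inductive step, observe that $\max(z,x)$ is nondecreasing and $1$-Lipschitz in $z$, and that the induction hypothesis together with Assumption~\ref{ass:integrability} (via $|\max(z,X)|\le|z|+|X|$) makes all expectations finite. Taking expectations then preserves both monotonicity and the Lipschitz bound. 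A maximum over finitely many $j'$ preserves them as well, and so does the outer $\max\{z,\cdot\}$. For the upper bound I use $\mathbb{E}[\max(z,X)]\le z+\mathbb{E}[X^+]$ when $z\ge 0$, and more generally $\max(z,X)\le z+(X-z)^+\le z+|X|+\ldots$. A cleaner route is to note that $\mathbb{E}[\max(z,X_{i+1})]-z=\mathbb{E}[(X_{i+1}-z)^+]$, which is nonincreasing in $z$; this gives $C_i(z,j)-z\le \mathbb{E}[(X_{i+1}-z)^+]+K_{i+1}-c_{i+1}$. Finiteness of the maximum over $j'$ is automatic because the set of choices is finite, which is also the step where Theorem~\ref{thm: optimalstrategy} is used.

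The second step is to show that $g$ is strictly decreasing. The key fact is that $z\mapsto \mathbb{E}[V(i+1,\max(z,X_{i+1}),j')]-z$ is nonincreasing, because each increment in $z$ raises the integrand by at most the same increment. This alone gives only weak monotonicity, and I expect strictness to be the main obstacle. The way through is to work on the region where $g\ge 0$ rather than everywhere. I will show that once $z$ is large enough that continuing yields less than the full increment, specifically once the inequality $\mathbb{E}[V(i+1,\max(z,X),j')-z]\le \mathbb{E}[(X-z)^+]+(\text{terms bounded})$ becomes strict, the function $g$ is strictly decreasing. Failing that, I will fall back on a weaker uniqueness statement: if $g(z_1)=g(z_2)=0$ with $z_1<z_2$, then $g\equiv0$ on $[z_1,z_2]$, and the continuation payoff tracks $z$ exactly on that interval. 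Because $c_{i+1}>0$ and $\mathbb{E}[(X-z)^+]\to0$, this can only happen on a bounded set, and I would then define $r_i(j)=\sup\{z:g(z)\ge0\}$, which is unique as a cutoff.

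The third step establishes the limits and concludes. As $z\to+\infty$, dominated convergence gives $\mathbb{E}[(X_{i+1}-z)^+]\to0$, and by induction the option-value term $K$ can be refined to a quantity that vanishes as $z\to\infty$, because at high guarantees the future rewards are never exceeded. Hence $\limsup g(z)\le -c_{i+1}<0$. As $z\to-\infty$, I bound from below by choosing $j'=n$ and then stopping after one box, which gives $g(z)\ge -\psi(i,j,n)-c_{i+1}+\mathbb{E}[X_{i+1}]-z\to+\infty$. Continuity of $g$, which follows from the Lipschitz property, and the intermediate value theorem then yield a finite root. Uniqueness follows from step two.
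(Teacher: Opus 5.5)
\textbf{Verdict: there is a genuine gap, in the uniqueness step.}

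\textbf{Where you agree with the paper.} Your architecture matches the paper's proof. You replace the literal condition $V(i,r,j)=r$, which holds on a whole half-line, by the indifference condition $g(r)=0$ with $g(z)=C_i(z,j)-z$; this is exactly the paper's $f_i^j$. You then use continuity, $g\to+\infty$ as $z\to-\infty$, $\limsup_{z\to\infty}g\le -c_{i+1}$, and the intermediate value theorem. Existence and finiteness go through, and your lower bound via $j'=n$ and $V(i+1,w,n)\ge w$ is cleaner than the paper's Step 3.

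\textbf{The gap.} Your Step 2 contains no actual argument. The criterion ``once the inequality becomes strict'' is never verified. The fallback does not prove the theorem: it concedes that $g$ might vanish on a nondegenerate interval and then \emph{defines} $r_i(j):=\sup\{z:g(z)\ge 0\}$. That selects one root but does not show the indifference point is unique. The remark that such an interval must be bounded is beside the point.

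\textbf{The missing idea.} On $\{g\ge 0\}$, every maximizing branch has gross gain at least $c_{i+1}>0$, and positive gross gain forces strict decrease. Write $g=\max_{j'}h_{j'}$ with $h_{j'}(z)=-\psi(i,j,j')-c_{i+1}+G_{j'}(z)$, where
\[
G_{j'}(z)=\mathbb{E}\bigl[(X_{i+1}-z)^+\bigr]+\mathbb{E}\bigl[O_{i+1}(\max(z,X_{i+1}),j')\bigr],\qquad O_k(w,j'):=V(k,w,j')-w .
\]
Both terms are nonincreasing in $z$, the second by your $1$-Lipschitz property.

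Add to your backward induction the hypothesis that $O_k(\cdot,j')$ is strictly decreasing to the right of any point where it is positive; this is vacuous at $k=n$. Now fix $z$ with $g(z)\ge 0$ and any $z'>z$.
\begin{itemize}
\item A non-maximizing branch satisfies $h_{j'}(z')\le h_{j'}(z)<g(z)$.
\item A maximizing branch has $G_{j'}(z)=g(z)+c_{i+1}+\psi(i,j,j')\ge c_{i+1}>0$. This is where Assumption 2 and $\psi\ge 0$ enter. There are two cases.
\begin{itemize}
\item If $\Pr(X_{i+1}>z)>0$, then $\mathbb{E}[(X_{i+1}-z)^+]-\mathbb{E}[(X_{i+1}-z')^+]=\int_z^{z'}(1-F_{i+1}(t))\,dt>0$ by right-continuity of $F_{i+1}$.
\item If $X_{i+1}\le z$ a.s., then $G_{j'}=O_{i+1}(\cdot,j')$ on $[z,\infty)$ with $O_{i+1}(z,j')>0$, so the inductive hypothesis applies.
\end{itemize}
\end{itemize}
Either way $h_{j'}(z')<g(z)$, hence $g(z')<g(z)$, which gives uniqueness. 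Since $O_i(\cdot,j)=\max\{0,g\}$, the same inequality also propagates the inductive hypothesis to stage $i$. The paper's Step 2 asserts this strict decrease only through an informal ``positive option value'' argument; the induction above is what makes it rigorous.

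\textbf{A separate error in Step 1.} Your bound $V(i,z,j)\le z+K_i$ with a constant $K_i$ is false for $i<n$: your own Step 3 shows $V(i,z,j)-z\to+\infty$ as $z\to-\infty$. Use instead $V(i,z,j)\le\mathbb{E}[\max(z,M_{i+1})]$ with $M_{i+1}=\max_{k>i}X_k$, which holds because all costs are nonnegative. This gives finiteness and, directly, $g(z)\le -c_{i+1}+\mathbb{E}[(M_{i+1}-z)^+]\to -c_{i+1}$, as in the paper's Step 4.
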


\noindent We now state our main result.

\begin{theorem}\label{thm: optimal strategy}

\noindent PS is the optimal strategy for any SSP.  
\end{theorem}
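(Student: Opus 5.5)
The plan is a verification argument in two parts. First, the value function $V$ of Definition~\ref{def:bellman} is the optimal value over $\Pi$, and any policy that picks a maximizer of the Bellman operator at every state attains it. Second, the maximizer prescribed by PS, namely stopping iff $z\ge r_i(j)$ and otherwise continuing with $j'_i(z,j)$, is such a maximizer. The first part is the standard finite-horizon backward induction already behind Theorem~\ref{thm: optimalstrategy}. The supremum over $j'$ is over the finite set $\{i+1,\dots,n\}$, so it is attained. Assumption~\ref{ass:integrability} makes every expectation finite. By induction from $i=n$ down to $i=0$, the expected payoff of any $\pi\in\Pi$ from state $(i,z,j)$ is at most $V(i,z,j)$, with equality for a policy that selects Bellman maximizers.

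The substantive step is to show that, with
\[
C(i,z,j)=\max_{j'\in\{i+1,\dots,n\}}\Bigl[-\psi(i,j,j')-c_{i+1}+\mathbb{E}\bigl[V(i+1,\max(z,X_{i+1}),j')\bigr]\Bigr],
\]
the inequality $z\ge C(i,z,j)$ holds exactly on $[r_i(j),\infty)$. Then ``stop iff $z\ge r_i(j)$'' coincides with ``stop iff stopping attains $V(i,z,j)=\max\{z,C(i,z,j)\}$''.

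To get this threshold structure, I will prove by backward induction on $i$ that, for each $j$, the map $z\mapsto V(i,z,j)$ is nondecreasing and $1$-Lipschitz. Equivalently, $V(i,z,j)-z$ is nonincreasing in $z$. The base case $V(n,z,j)=z$ is immediate. For the inductive step, $z\mapsto \max(z,x)$ is nondecreasing and $1$-Lipschitz, so $\mathbb{E}[V(i+1,\max(z,X_{i+1}),j')]$ inherits both properties, with dominated convergence justified by Assumption~\ref{ass:integrability}. Subtracting the $z$-independent costs and taking a finite maximum over $j'$ preserves both properties. Hence $C(i,\cdot,j)$ is nondecreasing and $1$-Lipschitz, so $C(i,z,j)-z$ is nonincreasing. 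Finally, $V(i,\cdot,j)=\max\{\mathrm{id},C(i,\cdot,j)\}$ is again nondecreasing and $1$-Lipschitz, which closes the induction.

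Given this, the set $\{z: C(i,z,j)\le z\}$ is an upper interval. Theorem~\ref{thm:reservation-exists} gives a finite, unique $r_i(j)$ with $V(i,r_i(j),j)=r_i(j)$, that is, $C(i,r_i(j),j)\le r_i(j)$. I will identify $r_i(j)$ with the left endpoint of this interval. Uniqueness in Theorem~\ref{thm:reservation-exists} rules out a nondegenerate interval of indifference below it. Concretely, if $C(i,z,j)\le z$ for some $z<r_i(j)$, then $z$ would also solve \eqref{eq:reservation}, a contradiction. So for $z<r_i(j)$ we have $C>z$ and continuing is strictly optimal. For $z\ge r_i(j)$ we have $C\le z$ and stopping attains $V$; at $z=r_i(j)$ the DM is indifferent, and stopping is optimal.

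In the continuation region, choosing $j'_i(z,j)$ attains $C(i,z,j)=V(i,z,j)$ by definition. At $i=0$, $j_0(z)$ attains the inner supremum in $V_0$, and the same comparison of $m_0$ with the continuation value applies. Thus PS selects Bellman maximizers at every state, and by the first part it is optimal.

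The main obstacle I expect is the monotonicity step, in particular making the identification of $r_i(j)$ as the boundary of the stopping region rigorous. This is where the $1$-Lipschitz property must be propagated through the expectation and through the maximum over scopes, and where the uniqueness from Theorem~\ref{thm:reservation-exists} is needed to exclude a degenerate flat segment of $C(i,z,j)-z$ at zero. If uniqueness were only weak, I would instead define $r_i(j)=\inf\{z: C(i,z,j)\le z\}$ directly and use the same argument.
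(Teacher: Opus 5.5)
\textbf{Gap: the identification of $r_i(j)$ rests on a uniqueness claim your own lemma refutes.} Your overall route matches the paper's. The paper's proof simply asserts $V(i,z,j)=z$ for $z\ge r_i(j)$ and $V(i,z,j)>z$ for $z<r_i(j)$, then appeals to the Bellman equation. You go further and prove the nondecreasing, $1$-Lipschitz property of $V(i,\cdot,j)$ by induction, and you add the verification step. That part is sound. The problem is the step where you read Theorem~\ref{thm:reservation-exists} as giving a \emph{unique} solution of \eqref{eq:reservation} and use that uniqueness to exclude stopping points below $r_i(j)$.

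You correctly observe that $V(i,z,j)=z$ is equivalent to $C(i,z,j)\le z$. Your lemma shows $\{z: C(i,z,j)\le z\}$ is an upper interval. So every $z\ge r_i(j)$ solves \eqref{eq:reservation}: its solution set is a half-line, and the uniqueness you invoke is false. The contradiction you derive for $z<r_i(j)$ therefore comes from an inconsistent premise. This is not a case of uniqueness being ``only weak.''

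If $r_i(j)$ is taken literally as \emph{a} solution of \eqref{eq:reservation}, the step genuinely fails. Pick $r_i(j)$ beyond the largest root of $C(i,z,j)=z$. Then PS continues at guarantees $z<r_i(j)$ where $C(i,z,j)<z$, which is strictly suboptimal. The fixed-point equation $V(i,r,j)=r$ does not single out the threshold.

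\textbf{The repair.} Take $r_i(j)$ to be a root of $C(i,z,j)=z$, i.e.\ of $f_i^{j}$. This is what the proof of Theorem~\ref{thm:reservation-exists} actually constructs, and how the paper states it in the subsection on economic effects. Your monotonicity lemma then finishes the argument at once:
\begin{itemize}
\item for $z<r_i(j)$, you get $C(i,z,j)-z\ge C(i,r_i(j),j)-r_i(j)=0$;
\item for $z\ge r_i(j)$, you get $C(i,z,j)-z\le 0$.
\end{itemize}
Hence PS selects a Bellman maximizer at every state. Uniqueness of the root is not even needed for optimality; it matters only for strict preference below the threshold.

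Your fallback $r_i(j)=\inf\{z: C(i,z,j)\le z\}$ also works, with three additions. Cite Step~4 of the proof of Theorem~\ref{thm:reservation-exists} to make the set nonempty. Cite Step~3 to make it bounded below. Use continuity of $C(i,\cdot,j)$ to get $C(i,r_i(j),j)=r_i(j)$.
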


\noindent So far, we assumed that once committed to a scope $j,$ the DM faces no costs when terminating the plan before $j$ while  a positive adjustment cost needs to be paid to expand the scope of search. The DM will have to pay a differential cost that accounts for the newer scope.\footnote{We do not impose further structure on $\psi$ unless stated otherwise. When $\psi \equiv 0$, scope expansion is costless, so the DM can make all boxes accessible without incurring any planning cost.} Thus, the DM’s total cost consists of inspection costs, scope cost and the cost of adjusting the scope. Next, we model adjustment costs as submodular in scope $j$ and show some interesting implications on the stop/search decisions of the same. 
\begin{assumption}[Submodular adjustment costs]
\label{ass:submodular}
For each stage $i$, the adjustment cost function
$\psi(i,\cdot,\cdot)$ is submodular in $(j,j')$: for all
$j_1 \le j_2$ and $j_1' \le j_2'$ in the feasible set,
\[
    \psi(i,j_1,j_2') - \psi(i,j_1,j_1')
    \;\ge\;
    \psi(i,j_2,j_2') - \psi(i,j_2,j_1').
\]
Equivalently, the net-of-adjustment payoff $-\psi(i,j,j')$ has
increasing differences in $(j,j')$.
\end{assumption}
\noindent This assumption captures the idea that having a larger current scope reduces the marginal cost of further expansion.\medskip

\begin{proposition}[Comparative Statics]
\label{prop: CS}
The following are true.
\begin{enumerate}
    \item[\normalfont(i)] $V(i,z,j)$ is weakly increasing in the guarantee $z$ and weakly increasing in the scope $j$. The reservation value $r_{i}(j)$  is weakly increasing in $j;$ that is, if $j' \ge j$, then $r_i(j') \ge r_i(j)$.

    \item[\normalfont(ii)] Let $V^c$ denote the value function under inspection cost $c_{i+1}$, and let $r_i^c(j)$ denote the associated reservation value. If $\tilde c_{i+1} \ge c_{i+1}$, then
    $r_i^{\tilde c}(j) \le r_i^c(j)  \text{ for every feasible } j.$
    Let $V^\psi$ denote the value function under adjustment cost function $\psi$, and let $r_i^\psi(j)$ denote the associated reservation value. If
    $\tilde \psi(i,j,j') \ge \psi(i,j,j')  \text{ for all feasible } (i,j,j'),$
    then
    $r_i^{\tilde \psi}(j) \le r_i^\psi(j) \text{ for every feasible } j.$

    \item[\normalfont(iii)] Under Assumptions~\ref{ass:integrability}--\ref{ass:submodular}, fix a stage $i$ and guarantee $z$. Define the optimal next-scope correspondence,
    $$J_i(z,j)
    =
    \arg\max_{j' \in \{i+1,\ldots,n\}}
    \left\{
    -\psi(i,j,j') - c_{i+1}
    + \mathbb E\!\left[V(i{+}1,\max(z,X_{i+1}),j')\right]
    \right\}.
    $$
    Then $J_i(z,j)$ is weakly increasing in $j$ in the strong set order.\footnote{A correspondence is weakly increasing in the strong set order if its minimal and maximal selections are weakly increasing.}
    In particular, $\underline{j}_i(z,j) = \min J_i(z,j)$ and $\overline{j}_i(z,j) = \max J_i(z,j)$ are weakly increasing in $j$.
\end{enumerate}
\end{proposition}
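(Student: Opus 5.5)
The plan is backward induction on the stage index $i$, starting from the terminal condition $V(n,z,j)=z$ in Definition~\ref{def:bellman}. For the reservation-value parts, I will work with the continuation payoff
\[
C(i,z,j)=\max_{j'\in\{i+1,\ldots,n\}}\Bigl[-\psi(i,j,j')-c_{i+1}+\mathbb{E}\bigl[V(i+1,\max(z,X_{i+1}),j')\bigr]\Bigr]
\]
and the gap $D(i,z,j)=C(i,z,j)-z$. By construction $V(i,z,j)=\max\{z,C(i,z,j)\}$. By Theorem~\ref{thm:reservation-exists} and the threshold form in Theorem~\ref{thm: optimal strategy}, $r_i(j)$ is the point where $D(i,\cdot,j)$ changes sign. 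It is nonnegative to the left of $r_i(j)$ and nonpositive to the right.

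\textbf{Part (i).} I will prove two properties jointly by induction on $i$. The first is that $V(i,\cdot,j)$ is nondecreasing. The second is that it is $1$-Lipschitz from above: $V(i,z',j)-V(i,z,j)\le z'-z$ for $z'\ge z$. Both hold at $i=n$. At the induction step, $z\mapsto \max(z,X_{i+1})$ is nondecreasing and $1$-Lipschitz. Taking expectations preserves both properties, with finiteness guaranteed by Assumption~\ref{ass:integrability} and dominated convergence. A finite maximum over $j'$, followed by a maximum with $z$, also preserves them. A consequence is that $D(i,\cdot,j)$ is nonincreasing, which is what makes the sign-change description of $r_i(j)$ well defined.

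For monotonicity in $j$, note that $j$ enters the recursion only through $-\psi(i,j,j')$. So I need $\psi(i,j,j')$ to be nonincreasing in the current scope $j$. This follows from Assumption~\ref{ass:submodular}. Take $j_1\le j_2$ and a feasible $j_1'\le j_1$, for instance $j_1'=i+1$ when $j_1\ge i+1$. Assumption~\ref{ass:adjustment} then gives $\psi(i,j_1,j_1')=\psi(i,j_2,j_1')=0$, and submodularity yields $\psi(i,j_1,j_2')\ge\psi(i,j_2,j_2')$. In the boundary case $j_1=i$, I will argue this directly, or record it as the natural maintained property that holds in the paper's example $\psi=\phi(j'-i)-\phi(j-i)$. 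Given this, $C(i,z,j_2)\ge C(i,z,j_1)$ termwise, hence $V(i,z,j_2)\ge V(i,z,j_1)$ and $D(i,z,j_2)\ge D(i,z,j_1)$ for every $z$. Since each $D(i,\cdot,j)$ is nonincreasing, its sign-change point moves weakly to the right when $D$ is shifted up pointwise. Therefore $r_i(j_2)\ge r_i(j_1)$.

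\textbf{Part (ii).} Write $V^{\tilde c}$ and $V^{c}$ for the value functions under the two inspection-cost vectors. A parallel induction gives $V^{\tilde c}\le V^{c}$ pointwise: the costs enter the Bellman operator subtractively, and the operator is monotone in its continuation function. This yields $D^{\tilde c}\le D^{c}$ pointwise. Since both gaps are nonincreasing in $z$ (the Lipschitz argument does not depend on the cost level), the zero-crossing moves weakly left, so $r^{\tilde c}_i(j)\le r^{c}_i(j)$. The argument for $\tilde\psi\ge\psi$ is identical.

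\textbf{Part (iii).} Fix $i$ and $z$. The objective is
\[
g(j,j')=-\psi(i,j,j')+h(j'),\qquad h(j')=-c_{i+1}+\mathbb{E}\bigl[V(i+1,\max(z,X_{i+1}),j')\bigr].
\]
The term $h$ does not depend on $j$, so $g$ inherits increasing differences in $(j,j')$ from Assumption~\ref{ass:submodular}. The choice set $\{i+1,\ldots,n\}$ is a chain and does not depend on $j$. Topkis' monotonicity theorem then gives that $J_i(z,j)$ is increasing in $j$ in the strong set order. In particular its minimal and maximal selections are nondecreasing, and these exist because the set is finite and nonempty.

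\textbf{Main obstacle.} The delicate step is the passage from pointwise ordering of value functions to ordering of the thresholds $r_i(j)$ in parts (i) and (ii). This requires showing that $D(i,\cdot,j)$ is nonincreasing, via the $1$-Lipschitz induction, so that the threshold is a genuine single crossing point. It also requires that $\psi$ be nonincreasing in the current scope, which has to be extracted from Assumptions~\ref{ass:adjustment} and~\ref{ass:submodular}. I would handle the $1$-Lipschitz induction first, since everything else reduces to routine monotone comparative statics.
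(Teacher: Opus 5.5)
Your proposal is correct and follows the same route as the paper: order the continuation payoffs pointwise, use that the net gain from continuing is nonincreasing in $z$ to order the thresholds in (i)--(ii), and apply Topkis's theorem in (iii). Where you differ, you are more careful than the paper: you locate $r_i(j)$ as the zero of $C(i,z,j)-z$ rather than of $V(i,z,j)-z$ (the latter vanishes on the whole half-line $z\ge r_i(j)$, so the paper's ``unique zero'' step in (i) carries no information), and you make explicit that (i) needs $\psi(i,j,j')$ to be nonincreasing in the current scope, which the paper simply asserts in the proof of Theorem~\ref{thm: optimalstrategy} and which does not follow from Assumptions~\ref{ass:integrability}--\ref{ass:adjustment} alone; your derivation of it from Assumption~\ref{ass:submodular} with $j_1'=i+1$ also covers the boundary case $j_1=i$ directly, since $\psi(i,j_2,i+1)=0\le\psi(i,i,i+1)$ whenever $j_2\ge i+1$.
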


\noindent Proposition \ref{prop: CS} highlights the economic role of scope in shaping search behavior. Part (i) shows that a higher guaranteed payoff or a larger prepaid scope increases the value of continuing the search. In particular, when more scope has already been paid for, the DM is willing to continue even when the guaranteed amount is high, which leads to higher reservation values.  Part (ii) shows that higher inspection or adjustment costs make continuation less attractive, lowering reservation values. Finally, part (iii) shows that under submodular adjustment costs, the optimal scope choice is monotone: a DM with a larger current scope chooses (weakly) larger future scope. This reflects the fact that expanding scope becomes less costly at the margin when more scope has already been prepaid.

\noindent Next, we find an upper bound on the probability that a DM following the general optimal search strategy expands the scope during the search i.e. stops the search at a time later than the pre-committed scope $j_0^*$.

\begin{proposition}[Probability of adaptive expansion]
\label{prop:expansion-bound}
Let $j_0^*$ denote the optimal initial scope and $\tau^*$ the optimal
stopping time. Under Assumptions~1--4,
\begin{equation}
\label{eq:bound}
\Pr(\tau^* > j_0^*) \;\leq\; \mathbf{1}\{m_0 < r_{j_0^*}(n)\} \cdot
\prod_{i=1}^{j_0^*} F_i\!\left(r_{j_0^*}(n)^{-}\right),
\end{equation}
where $F_i(x^-) = \Pr(X_i < x)$.
\end{proposition}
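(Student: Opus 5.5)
The plan is an event-inclusion argument. I would show that the event $\{\tau^*>j_0^*\}$ is contained in a deterministic event about $m_0$ and the first $j_0^*$ rewards, and then use independence of the $X_i$.

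First I dispose of the boundary case. If $j_0^*=n$, then $\tau^*\le n=j_0^*$ surely, so the left side of \eqref{eq:bound} is $0$ and there is nothing to prove. Hence assume $j_0^*<n$. Then $j_0^*$ is a stage at which $r_{j_0^*}(\cdot)$ is defined, by Theorem~\ref{thm:reservation-exists}. Note also that $j_0^*$ is chosen before any box is opened, so it is a deterministic function of $m_0$ and the primitives, and the threshold $r:=r_{j_0^*}(n)$ is a fixed number.

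Next I carry out the inclusion. On $\{\tau^*>j_0^*\}$ the DM reaches stage $j_0^*$ and chooses to continue there. Let $j=j_{j_0^*-1}$ be the (random, history-dependent) paid scope at that state, with $j_0^*\le j\le n$. By Theorem~\ref{thm: optimal strategy}, the optimal strategy is PS, which continues at state $(j_0^*,z,j)$ only if $z<r_{j_0^*}(j)$. Here $z=m_{j_0^*}=\max(m_0,X_1,\dots,X_{j_0^*})$. By Proposition~\ref{prop: CS}(i), $r_{j_0^*}(j)\le r_{j_0^*}(n)=r$ for every feasible $j$. This uniform bound is the step I would be most careful with: it is what removes the dependence on the random scope path, so the event no longer involves past scope choices. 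It gives
\[
\{\tau^*>j_0^*\}\subseteq\{\max(m_0,X_1,\dots,X_{j_0^*})<r\}
=\{m_0<r\}\cap\bigcap_{i=1}^{j_0^*}\{X_i<r\}.
\]
If one worries about ties or alternative optimal selections at $z=r_{j_0^*}(j)$, I fix $\pi^*$ to be PS, which stops when $z\ge r_{j_0^*}(j)$, so the strict inequality is correct.

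Finally, $\{m_0<r\}$ is deterministic and contributes the indicator. The $X_i$ are independent and $r$ is non-random, so
\[
\Pr\Bigl(\bigcap_{i=1}^{j_0^*}\{X_i<r\}\Bigr)=\prod_{i=1}^{j_0^*}F_i(r^-),
\]
which yields \eqref{eq:bound}.

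The bound is loose for two reasons: it discards the continuation requirements at the earlier stages $1,\dots,j_0^*-1$, and it replaces $r_{j_0^*}(j)$ by $r_{j_0^*}(n)$. Both are harmless because only an upper bound is claimed. The only genuine obstacle is the history-dependence of the scope at stage $j_0^*$, and the monotonicity in Proposition~\ref{prop: CS}(i) removes it.
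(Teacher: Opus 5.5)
Your proposal is correct and follows essentially the same route as the paper. Both arguments establish the pathwise inclusion $\{\tau^*>j_0^*\}\subseteq\{\max(m_0,X_1,\dots,X_{j_0^*})<r_{j_0^*}(n)\}$ from the threshold rule of Theorem~\ref{thm: optimal strategy} and the monotonicity $r_{j_0^*}(j)\le r_{j_0^*}(n)$ of Proposition~\ref{prop: CS}(i), then factor the probability using independence. Your separate treatment of $j_0^*=n$ (where Definition~\ref{def:reservation} leaves $r_n(\cdot)$ undefined) and your explicit tie-breaking convention are small gains in care over the paper's write-up.
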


Proposition~\ref{prop:expansion-bound} bounds the probability that the firm
expands its planning horizon mid-search, expressed in terms of model
primitives and the optimal initial scope. The bound is the probability that
the firm fails to encounter, across its $j_0^*$ pre-paid stages, a reward
high enough to make stopping optimal at the broadest feasible scope. It is
sharp in the sense that any tighter bound must use information about the
firm's actual scope path, which depends on the realized rewards. Three
features make the result useful. First, the right-hand side is computable in
primitives -- the reservation value $r_{j_0^*}(n)$ is determined by the
downstream cost structure and reward distributions, and the product
involves only the marginal distributions $F_1, \dots, F_{j_0^*}$. Second,
the bound delivers immediate qualitative implications. It is zero when
$j_0^* = n$, recovering the intuition that a firm pre-committed to its
full horizon has nothing to expand into; it is decreasing in $j_0^*$, so
broader pre-commitment leaves less room for adaptive deviation; and it is
monotone in inspection and adjustment costs through their effect on
$r_{j_0^*}(n)$.\footnote{Furthermore, the bound provides a probabilistic counterpart to
the over-planning result of Section~3.2, which establishes
$\mathbb{E}[\tau^*] \leq j_0^*$ under sufficient conditions: when those
conditions tighten so that no expansion is ever optimal on the equilibrium
path, the bound collapses to zero. The example of Section~1.1, by contrast,
exhibits the under-planning regime in which the bound is positive and
tight: there, the bound equals $\Pr(X_1 = 0) = 1/2$, which coincides with
the actual probability of expansion.} 

\subsection{Sequential search with costless planning}
\label{sec:block-adaptive}
In this section, we consider the case where planning does not require any ex-ante cost i.e $\phi(j_0)=0$ and $\psi(i,j,j')=0$ for all $i,j,j'$. The DM only pays the inspection cost for observing a box and can revise the plan without additional cost during the search. Setting $\phi(j_0)=0$ and $\psi(i,j,j')=0$ in SSP, we get \cite{weitzman}'s Pandora's box problem with boxes arranged in an exogenous sequence. This is similar to the structure in \cite{boodaghians}. We note that planning has no implications for the search/stop decisions in the absence of scope and adjustment costs since the DM can costlessly revise a plan at any stage. We introduce an alternate formulation that explicitly models the search procedure that may be obscured by the general framework of the DP formulation, and we show that the optimal scope in the absence of planning costs is $n$. 
\noindent Consider any $i < n$ and current scope $j \in \{i,\ldots,n\}$. After observing
$X_i = x$, let $z = \max(m, x)$ denote the maximum guaranteed reward (maximum of endowment $m$ or the best observed reward $x$) at stage $i$. We introduce the following recursive function as an alternative way of modeling the sequential search problem.
For any scope $i,\ldots ,j$ and any realised value $x$ of $X_i$, the agent's maximum payoff is 
$H^\phi_{j,j}(m, x) = \max(m, x),$
in the terminal case $i=j$. For all the preceding cases, where  $i<j,$
\begin{equation}
\label{eq:adaptive}
    H^\phi_{i,j}(m, x) = \max\!\left\{z,\;
    \sup_{j' \in \{i+1,\ldots,n\}} \left[-\psi(i,j,j') - c_{i+1}
    + \mathbb{E}\!\left[H^\phi_{i+1,j'}(z, X_{i+1})
    \right]\right]\right\}.
\end{equation}

\noindent We call the above payoff function the \textit{scope adaptive payoff}. $H^\phi_{i,j}(m,x)$ models  the sequential search problem without using the standard DP value function. Its recursive construction makes it adaptive to the plan being revised during the search. The optimal revision of scope choice from $j$ to $j'$ is  modeled by the following expression, \[\sup_{j' \in \{i+1,\ldots,n\}} \left[-\psi(i,j,j') - c_{i+1}
    + \mathbb{E}\!\left[H^\phi_{i+1,j'}(z, X_{i+1})
    \right]\right]\]. 
   \noindent Given the above formulation, what is the optimal scope that a DM would choose? When $\psi(i,j,j')=0$ for all $i,j,j'$, it is easy to show that the payoff $H^\phi_{i,j}(m,x)$ is weakly increasing in $j,j'$. This monotonicity property of the scope adaptive payoffs implies that it is always optimal for a DM to choose the scope $j_0=n$ and subsequently $j_i=n$ at each stage $i$ when the search continues. If it is costless to plan to explore any stage $k'\geq j$, the decision to open the $k$th box would depend only on its expected reward, inspection cost and the expected rewards from the subsequent $k+1,\ldots n$ unopened boxes. The choice of scope in this case becomes redundant. Further, we find that the optimal strategy in both the formulations, DP and scope adaptive formulation (SAF) are equivalent. We state this result below and provide the proof in the appendix.\footnote{In contrast, when $\psi \neq 0$ the ex-ante value of the initial
scope need not be monotone in $j$: the example of
Section~\ref{sec:example} already exhibits a strictly interior optimum
$j_0^* = 1$, with broader scopes dominated because the convex scope cost is
paid up front for stages that may never be reached.} 
    \begin{theorem}\label{thm: equivalence DP and SAF}
        The optimal search strategy in SSP and in SAF are equivalent for all $i\leq n$. The optimal scope  when $\psi=0$ is $n$. \end{theorem}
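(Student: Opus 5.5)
The plan is to prove both claims by backward induction on the stage index $i$, comparing the scope adaptive payoff $H^\phi_{i,j}(m,x)$ with the value function $V(i,z,j)$ of Definition~\ref{def:bellman}. The central identity is
\[
H^\phi_{i,j}(m,x)=V(i,\max(m,x),j)
\]
for every $i$, every feasible $j\in\{i,\ldots,n\}$, and every $(m,x)$.

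\emph{Base case.} At $i=n$ we have $H^\phi_{n,n}(m,x)=\max(m,x)=V(n,z,n)$. The recursion in~\eqref{eq:adaptive} formally terminates at $i=j$ with $H^\phi_{j,j}=z$ even when $j<n$. Because $j'$ ranges over $\{i+1,\ldots,n\}$ irrespective of $j$, the terminal case for $j<n$ must be read as the same max-of-stop-and-continue expression, exactly as in the DP; without this, a DM at $i=j<n$ could never expand. I would state this reading explicitly as the convention under which SAF and SSDP coincide. The only genuine base case is then $i=n$.

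\emph{Inductive step.} Suppose the identity holds at stage $i+1$ for all $j'$ and all arguments. Then $\mathbb{E}[H^\phi_{i+1,j'}(z,X_{i+1})]=\mathbb{E}[V(i+1,\max(z,X_{i+1}),j')]$, and these expectations are finite by Assumption~\ref{ass:integrability}: inductively $|V(i,z,j)|\le |z|+\sum_{k>i}\mathbb{E}|X_k|$ plus costs. Substituting into~\eqref{eq:adaptive} gives exactly the Bellman right-hand side, so the identity holds at stage $i$. Since the maximands agree state by state, the stop/continue sets and the argmax correspondences over $j'$ coincide. The two formulations therefore induce the same optimal strategies. In particular, Theorem~\ref{thm: optimal strategy} applies verbatim to SAF, and the reservation values from Definition~\ref{def:reservation} are identical.

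\emph{Second claim.} Set $\psi\equiv 0$ and $\phi\equiv0$, which is the costless-planning case of this section. I would show by backward induction that $V(i,z,j)$ does not depend on $j$. At $i=n$ this is trivial. If it holds at $i+1$, the continuation term $-c_{i+1}+\mathbb{E}[V(i+1,\max(z,X_{i+1}),j')]$ is constant in $j'$, and since $\psi=0$ nothing else depends on $j$ or $j'$. Hence every $j'\in\{i+1,\ldots,n\}$, including $j'=n$, attains the supremum, and $V(i,z,j)$ is independent of $j$. At stage $0$, the objective $-\phi(j_0)-c_1+\mathbb{E}[V(1,\cdot,j_0)]$ is likewise constant in $j_0$, so $j_0=n$ is optimal. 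If instead $\phi$ is strictly increasing as in Assumption~3, the ex-ante objective strictly penalizes larger $j_0$, and $n$ cannot be optimal.

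\emph{Main obstacle.} The real work is interpretive rather than computational. First, the terminal convention $H^\phi_{j,j}$ for $j<n$ must be reconciled with the DP's free choice of $j'$. Second, the phrase ``optimal scope is $n$'' must be read as ``$n$ attains the maximum (weakly)'', since with $\psi\equiv0$ every scope is optimal.
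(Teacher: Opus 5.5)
Your argument for the second claim is correct. With $\psi\equiv 0$ and $\phi\equiv 0$, backward induction shows that $V(i,z,j)$ does not depend on $j$, so $j_0=n$ (and $j'=n$ at every stage) attains the optimum. You are also right that a strictly increasing $\phi$, as in Assumption~3, would make the claim false. So $\phi\equiv 0$ must be imposed, and the paper's proof assumes this tacitly as well.

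The gap is in the first claim. Your central identity $H^\phi_{i,j}(m,x)=V(i,\max(m,x),j)$ is false for the scope adaptive payoff as the paper defines it.
\begin{itemize}
\item In the paper, $i=j$ is genuinely terminal: $H^\phi_{j,j}(m,x)=\max(m,x)$. Thus $H^\phi_{i,j}$ is a \emph{scope-restricted} payoff in which reaching the boundary forces a stop. The paper's own induction treats it this way, restricting the next scope to $k\le j$.
\item So at $j=i<n$ you have $H^\phi_{i,i}=z$, while $V(i,z,i)>z$ for all sufficiently low $z$. This holds even when $\psi=0$, where $V(i,z,i)=V(i,z,n)$. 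Your term-by-term matching in the inductive step therefore fails at $j'=i+1<n$.
\item Your ``convention'' removes the problem by replacing the terminal condition with the Bellman recursion itself. SAF and SSDP then become the same recursion symbol for symbol, and the first claim becomes a tautology valid for every $\psi$.
\item That it would hold for every $\psi$ shows it is no longer the paper's statement. For $\psi>0$ the literal SAF and SSDP genuinely differ: in the example of Section~\ref{sec:example}, $H^\phi_{1,1}(0,0)=0$ while $V(1,0,1)=45$.
\end{itemize}

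The missing idea is the one the paper uses to neutralize the forced stop. Compare $\sup_{j}H_{i,j}(m,x)$ with the DP value, rather than comparing $H_{i,j}$ scope by scope. Then invoke scope monotonicity (Lemma~\ref{prop:continuation-monotone}): with $\psi=0$, $H_{i,j}$ is weakly increasing in $j$, with $H_{i,i}=z$ as its smallest value. It follows that the supremum over $j'$ in \eqref{eq:adaptive} is attained at $j'=n$. Hence for every $j>i$, $H_{i,j}(m,x)=\max\{z,\,-c_{i+1}+\mathbb E[H_{i+1,n}(z,X_{i+1})]\}$, which by induction equals the plan-neutral value. Your independence-of-$j$ lemma then identifies this with $V(i,z,j)$ of the SSDP. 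This is exactly where $\psi=0$ enters the first claim, and your route never uses it.
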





\subsection{Economic effects governing the optimal strategy}

The threshold characterization in Theorem~3 and the comparative statics in
Proposition~1 are driven by three economic forces, which we now make
explicit. Define the net continuation payoff,
\begin{equation*}
  f_i^{\,j}(z) \;:=\; C_i^{\,j}(z) - z,
\end{equation*}
where
\begin{equation*}
  C_i^{\,j}(z)
  \;=\;
  \sup_{j' \in \{i+1,\ldots,n\}}
  \Bigl\{
    -\psi(i, j, j') - c_{i+1}
    + \mathbb{E}\bigl[ V(i+1, \max\{z, X_{i+1}\}, j') \bigr]
  \Bigr\},
\end{equation*}
is the optimal continuation payoff at state $(i, z, j)$. The reservation
value $r_i(j)$ is the unique solution to $f_i^{\,j}(r_i(j)) = 0$
(Theorem \ref{thm:reservation-exists}). Next we formalize the economic effects which are intertwined and affect the optimal search strategy.

\subsubsection{Guarantee effect}

Consider any stage $i < n$ and any feasible paid scope $j$. The net continuation payoff $f_i^j(z)$ is continuous and \emph{strictly decreasing} in the current guarantee $z$ on the set $\{z : f_i^j(z) \ge 0\}$. Further,
$\lim_{z\to -\infty} f_i^j(z) = +\infty$ and $ \lim_{z\to +\infty} f_i^j(z) < 0.$
Consequently there exists a unique root $r_i(j)$, and the decision maker stops if and only if the realized guarantee reaches or exceeds this threshold. We note  that the value functions satisfy the Lipschitz bound $0 \le V(i+1,w_1,j') - V(i+1,w_2,j') \le w_1 - w_2$ for $w_1 \ge w_2$ (finite-horizon backward induction). Hence for any $h > 0$, $C_i^j(z+h) \le C_i^j(z) + h$\footnote{See Step~2 in the proof of Theorem~2}. The inequality is strict on a positive-measure set of realizations of $X_{i+1}$ whenever the remaining horizon after stage $i+1$ carries positive option value (which holds almost in all cases under our assumptions). Therefore $f_i^j(z+h) - f_i^j(z) < 0$ for all sufficiently small $h > 0$, and by the global Lipschitz bound the same holds for all $h > 0$ on the relevant region. A higher current guarantee reduces the expected marginal improvement $\mathbb E[\max\{z,X_{i+1}\}-z]$ while inspection and adjustment costs remain fixed. Thus, stopping becomes optimal at lower continuation values.

\subsubsection{Paid-scope effect}

The continuation value $C_i^j(z)$ is non-decreasing in the current paid scope $j$. Consequently the reservation values are non-decreasing in paid scope, i.e. if $j' \ge j$ then $r_i(j') \ge r_i(j)$. From Proposition~1 ((i)-(ii)), we know that $V(i+1,\cdot,j')$ is weakly increasing in its third argument. Further, a larger current scope $j$ weakly lowers the adjustment cost $\psi(i,j,j')$ for every feasible expansion $j' > j$. The feasible set of future scopes is therefore larger and each element is weakly cheaper, so the maximized continuation payoff $C_i^j(z)$ rises (weakly) with $j$. A decision maker who has prepaid a broader horizon is therefore willing to tolerate a strictly higher guarantee before stopping, because future stages are {\it{cheaper}} to reach. This force has no counterpart in  \cite{weitzman}, \cite{Roberts1981-yd}, \cite{boodaghians} or \cite{Leeli2022}. These papers treat access to future boxes as free or independent of planning history; it is the distinctive content of costly planning.

\subsubsection{Remaining-horizon effect}

As the search progresses, i.e. the current stage $i$ advances, the option value of continuation shrinks because fewer future boxes remain. Formally,  let
$O_i(z,j) := V(i,z,j) - z$ denote the option value of search at state $(i,z,j)$. Clearly, $O_n(z,j) \equiv 0$ for all feasible $(z,j)$. For $i <n$, the option value satisfies the following recursive bound:

\[O_i(z,j) \le \sup_{j'\in\{i+1,\dots,n\}} \Bigl[ -\psi(i,j,j') - c_{i+1} + \mathbb E\bigl[O_{i+1}(\max\{z,X_{i+1}\},j')\bigr] \Bigr]
\]

This aggregates the expected net improvement only over the remaining tail $\{i+1,\dots,n\}$.  As $i$ increases the length of this tail falls, the upper bound tightens, and the option value is eventually driven to zero. This is the mechanical counterpart of the deadline effect in \cite{Leeli2022}; the "deadline'' here is the fixed stage index rather than calendar time.

\subsubsection{Interactions of the effects}

The three forces are not additive. First, the paid-scope effect \emph{attenuates} the guarantee effect -- a broader prepaid scope raises continuation value most when the current guarantee $z$ is low, precisely because that is when the probability of choosing a strictly larger future scope $j' > j$ is the highest and the marginal cost reduction therefore has the largest impact. Submodularity of adjustment costs (Assumption~5) formalizes this complementarity; Proposition~1(iii) shows that it implies the optimal next-scope correspondence $J_i(z,j)$ is weakly increasing in $j$ in the strong set order. Second, the remaining-horizon effect \emph{modulates} the strength of the other two -- the paid-scope effect is strongest early in the sequence (large $n-i$), when many stages remain and the option value of expansion is large, while the guarantee effect dominates late in the sequence, when scope differences matter less because little horizon remains. These interactions are what generate both under-planning regimes $(E[\tau^*] > j_0^*$) and over-planning regimes $(E[\tau^*] < j_0^*$) depending on primitives such as tail thickness and the curvature of planning costs (Section~\ref{sec:examples}).

The three-stage binary example of Section 1.1 isolates the paid-scope effect at stage 2 -- at guarantee $z_2 = 70$ a firm with prepaid scope $j=3$ continues $(70 < r_2(3) = 80$) while a firm with $j=2$ stops 
$(70 > r_2(2) = 60$), purely because of the difference in already-paid planning. The same example also illustrates the interaction with the guarantee effect after the low draw $X_1 = 0$, which induces expansion precisely because the paid-scope channel raises the continuation threshold enough to justify the adjustment cost.

\section{Two illustrations}
\label{sec:examples}
\noindent We now illustrate some implications of the framework for two stylized settings. The first example considers normally distributed rewards with convex scope and adjustment costs and provides sufficient conditions under which the expected stopping time is strictly smaller than the initial scope. The second example compares thin-tailed and heavy-tailed reward distributions and shows how tail behavior affects the value of broader scope.

\subsection{Normal rewards with convex adjustment costs}
\label{sec:normal}
 We examine the relationship between
the expected stopping time $\mathbb E[\tau^*]$ and the optimal initial
scope $j_0^*$ under two canonical placements of normal rewards. First, (a) first-order stochastic dominance (FOSD) increasing with common variance and (b) mean-preserving spreads (MPS) with common
mean. Throughout, we assume the convex adjustment cost specification
$\psi(i, j, j') = g(j' - j)$ for $j' > j$ with $g$ convex, $0$ if $j' \leq j$ and
$g(0) = 0$, and convex $\phi$ with $\phi(0) = 0$.

We define the  optimal stopping time as, $\tau^* = \min\bigl\{ i \in \{0, 1, \ldots, n\} :
    z_i \geq r_i(j_i) \bigr\},$
where $\{(z_i, j_i)\}$ is the state path induced by the optimal
policy. The following claim identifies for each of the two canonical placements, a sufficient condition on the one-step expansion cost $g(1)$ under which the DM never expands on the optimal path. When the condition holds, the DM's expected search depth is bounded above by their initial planning depth. Let $L(\eta)=\varphi(\eta)-\eta[1-\Phi(\eta)]$ denote the standard normal loss function. Here $\varphi$ and $\Phi$ are the density and cumulative distribution functions of the standard normal distribution, respectively. The function $L$ is strictly decreasing on $\mathbb R$, with $L(0)=1/\sqrt{2\pi}$, $L(\infty)=0$, and $L(-\infty)=\infty$. Its inverse $L^{-1}:(0,\infty)\to\mathbb R$ is therefore strictly decreasing and well-defined. 

\begin{claim}[Over-planning under normal rewards] The following are true.
\label{cl:overplan}
\begin{enumerate}
    \item[\normalfont(i)] \normalfont{(FOSD-increasing)} Suppose rewards
    are ordered by first-order stochastic dominance,
    $X_i \sim \mathcal N(\mu_i, \sigma^2)$ with common variance and
    increasing means $\mu_1 \leq \cdots \leq \mu_n$. If
    $g(1) \;\geq\; \sigma \cdot L\!\left(\dfrac{m_0 - \mu_n}{\sigma}\right),$
    then under the optimal policy the DM never expands,
    $\tau^* \leq j_0^*$ pathwise, and $\mathbb E[\tau^*] \;\leq\; j_0^*,$
    with strict inequality whenever $j_0^* \geq 2$.

    \item[\normalfont(ii)] \normalfont{(MPS-spread)} Suppose rewards differ
    by mean-preserving spreads, $X_i \sim \mathcal N(\mu, \sigma_i^2)$ with
    common mean $\mu$.  If $g(1) \;\geq\; \max_{k \geq 2} \,\sigma_k \cdot
    L\!\left(\dfrac{m_0 - \mu}{\sigma_k}\right),$
    then under the optimal policy the DM never expands,
    $\tau^* \leq j_0^*$ pathwise, and $\mathbb E[\tau^*] \;\leq\; j_0^*,$
    with strict inequality whenever $j_0^* \geq 2$.
\end{enumerate}
\end{claim}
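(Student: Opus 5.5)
The plan is to show that, under the stated bound on $g(1)$, the marginal benefit of expanding the scope by one stage is never worth its cost. Once expansion is ruled out, the pathwise and expectation statements follow directly.

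\textbf{Step 1: bound the benefit of expansion.} Fix a state $(i,z,j)$ with $z\ge m_0$ at which the DM continues. Compare next scope $j'=j$ (or any $j'\le j$, which is free) with any expansion $j'>j$. Expanding from scope $j$ to $j+1$ only matters if the DM actually reaches stage $j$ without stopping and then wants to open box $j+1$. So the gain from access to box $j+1$ is bounded by the one-step improvement from opening it, evaluated at the worst guarantee that can occur:
\[
\mathbb E\bigl[\max\{z',X_{j+1}\}-z'\bigr]=\sigma_{j+1}\,L\!\left(\tfrac{z'-\mu_{j+1}}{\sigma_{j+1}}\right)\le \sigma_{j+1}\,L\!\left(\tfrac{m_0-\mu_{j+1}}{\sigma_{j+1}}\right),
\]
using $z'\ge m_0$ and that $L$ is decreasing. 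Beyond the one-step gain, any further value from box $j+1$ onward also requires access to later boxes. I would handle this by induction from the top stage $n$ downward, using the Lipschitz bound $0\le V(i+1,w_1,j')-V(i+1,w_2,j')\le w_1-w_2$ from Step~2 of Theorem~\ref{thm:reservation-exists}. The aim is to show that $V(i+1,w,j+k)-V(i+1,w,j)\le k\cdot \max_{\ell} \sigma_\ell L\bigl((m_0-\mu_\ell)/\sigma_\ell\bigr)$. This should hold because the inspection cost $c>0$ is also paid, and each additional stage is itself subject to the same argument.

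\textbf{Step 2: compare with the cost.} Convexity of $g$ with $g(0)=0$ gives $g(k)\ge k\,g(1)$. So expanding by $k$ stages costs at least $k\,g(1)$, which by hypothesis is at least $k$ times the bound in Step~1. In case (i) the worst box is box $n$, because $\sigma L((m_0-\mu)/\sigma)$ is increasing in $\mu$. In case (ii) the relevant factor is $\max_{k\ge2}\sigma_kL((m_0-\mu)/\sigma_k)$; only $k\ge2$ appears because box 1 is always inside any initial scope $j_0\ge1$. Strictly positive inspection costs, or a tie-breaking rule in the argmax, then make $j'\le j$ weakly optimal at every state. Choosing the smallest maximizer shows that the optimal policy never expands.

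\textbf{Step 3: conclude.} With no expansion, the scope stays at $j_0^*$ along the whole path. Box $j_0^*+1$ cannot be opened, so $\tau^*\le j_0^*$ pathwise, and hence $\mathbb E[\tau^*]\le j_0^*$. For strictness when $j_0^*\ge2$, I would show that $\Pr(\tau^*=1)>0$. Since the normal distribution has unbounded support, $X_1$ exceeds the finite threshold $r_1(j_0^*)$ given by Theorem~\ref{thm:reservation-exists} with positive probability, and on that event the DM stops at stage 1 by Theorem~\ref{thm: optimal strategy}.

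\textbf{Main obstacle.} The hard part is the multi-stage bound in Step~1: the value of extra scope must be controlled by a sum of one-step loss terms. I would first try a coupling argument. Take the policy that is optimal with scope $j+k$, and mimic it with scope $j$ by stopping whenever it would open a box beyond $j$. The payoff lost on that event is at most the expected improvement from the remaining boxes, minus their inspection costs. Each box contributes at most its loss term, evaluated at $m_0$.
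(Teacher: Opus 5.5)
Your proposal is correct and follows essentially the same route as the paper: bound the gain from one extra stage by $\mathbb{E}[(X_{j+1}-z)^+]$ at the lowest reachable guarantee $m_0$, use convexity to get $g(k)\ge k\,g(1)$ for bulk expansions, and get strictness from the positive probability of stopping early, since normal rewards have unbounded support. Your coupling with backward induction (mimic the scope-$(j+k)$ policy and stop at stage $j$, losing at most $\sum_{\ell=j+1}^{j+k}\mathbb{E}[(X_\ell-z_j)^+]-c_{j+1}$) supplies the justification for the multi-step bound that the paper only asserts. The $-c_{j+1}$ term also makes non-expansion strictly better, so you do not need a tie-breaking rule.
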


It is important to note here that the conditions in Claim \ref{cl:overplan} depend on $g$ and the reward distributions
but not on $\phi$. This is because the expansion decision at any
stage $i \geq 1$ is independent of the initial scope cost; by the
time the DM faces an expansion choice, $\phi(j_0)$ has been paid
and is sunk. The shape of $\phi$ determines the level of $j_0^*$
but not whether expansion occurs along the equilibrium path. The
over-planning conclusion $\mathbb E[\tau^*] \leq j_0^*$ depends on $\phi$
only through its effect on $j_0^*$ itself.\footnote{The conclusion of Claim~\ref{cl:overplan} extends to other cost
curvatures with appropriate modifications. Under linear $g(k) =
\lambda k$, the bound on $g(1)$ is necessary and sufficient: any
expansion (single or bulk) is ruled out by the same threshold.
Under concave $g$, bulk expansion may be strictly preferred to
sequential, so the requirement becomes $g(k) \geq k \cdot \sigma
L((m_0 - \mu_n)/\sigma)$ for all $k$, which under concavity reduces
to $g(n) \geq n \cdot \sigma L((m_0 - \mu_n)/\sigma)$ --- tighter
by a factor of $n$ than the convex case. Since $L$ is decreasing, $\sigma L((z-\mu)/\sigma)$ is largest at the lowest reachable guarantee.}

Moreover, the bound $h(\sigma) = \sigma L((m_0 - \mu)/\sigma)$ is strictly
increasing in $\sigma$.\footnote{$h'(\sigma) =
\varphi((m_0-\mu)/\sigma) > 0$ by direct calculation.} A larger
variance unambiguously raises the upper bound on the marginal value
of expansion, the multiplicative effect dominates the diminishing
loss-function value. Consequently, under MPS-increasing
($\sigma_2 \leq \cdots \leq \sigma_n$), the bound in
Claim~\ref{cl:overplan} is at $k = n$. Under
MPS-decreasing ($\sigma_2 \geq \cdots \geq \sigma_n$), it is at
$k = 2$. The no-expansion condition is, therefore, tighter under
MPS-decreasing, since $\sigma_2$ in that case is smaller than
$\sigma_n$ in MPS-increasing. When the DM faces low-variance boxes
late, the temptation to expand is smaller and the over-planning
regime is easier to enter.\footnote{A complementary under-planning result, characterizing when
$\mathbb E[\tau^*] > j_0^*$, requires controlling the optimal probability
of expansion and the depth of post-expansion search. Unlike the
over-planning case, where a uniform bound on $g(1)$ suffices, the
under-planning comparison depends on the joint distribution of
expansion and continuation along the optimal path.}

The no-expansion condition admits a direct reading in terms of the three
forces. The quantity $\sigma L\!\left((m_0-\mu)/\sigma\right)$ is the
marginal paid-scope benefit at stage $i$: the option value of access to
box $j+1$, evaluated at the lowest reachable guarantee. The condition
$g(1)\geq \sigma L\!\left((m_0-\mu)/\sigma\right)$ states that this benefit
is dominated by the marginal adjustment cost at every reachable state. The
paid-scope effect is therefore present but inert---it never overturns the
guarantee effect on the optimal path, so the firm stops within its prepaid
horizon and $\mathbb{E}[\tau^*]\leq j_0^*$. The remaining-horizon effect
sharpens the comparison across variance orderings. Under MPS-decreasing
rewards the  bound is at $k=2$, since low-variance boxes arrive late
and the option value of the horizon is smallest there; the over-planning
regime is correspondingly easier to enter. Under MPS-increasing rewards the
bound binds at $k=n$, where the surviving option value is largest.

\subsection{Fat-tailed rewards and scope expansion}
\label{sec:tails}
We illustrate how the shape of the reward distribution affects how aggressively a DM searches and expands scope. Two DMs who face the same mean reward and the same costs may behave very differently if one draws from a thin-tailed distribution and the other from a heavy-tailed distribution. In a thin-tailed environment, large rewards are rare. Once the current guarantee is high, the probability of further improvement is small, and the DM stops early. In a heavy-tailed environment, by contrast, there remains a meaningful chance of a large reward even when the current guarantee is high. Therefore, the DM is willing to keep searching and to pay the expansion cost even at high guarantee levels.

We make this observation precise in a $n=3$ model. We hold the mean fixed and vary tail thickness, comparing an exponential distribution with a Pareto distribution. The object of interest is the \emph{reservation value} at stage 2. This is the guarantee level at which the DM is indifferent between stopping and continuing. We compute it first when scope is already full, so that no expansion cost is needed, and then when scope must first be expanded by paying an additional cost. The gap between these two thresholds measures the cost of incomplete scope in terms of the guarantee.

There are three boxes. We set the initial guarantee at $m_0 = 0$, and there is a common inspection cost $c > 0$. The cost of expanding the scope at stage 2 is given by $\alpha.$  The rewards $X_1, X_2, X_3$ are i.i.d.\ with CDF $F$ and mean $\mu$. We consider two distributions for $F$, Exponential and Pareto, with the same mean $\mu$.\footnote{A distribution is \textit{Exponential} if $X \sim \mathrm{Exp}(\lambda)$ with $\lambda = 1/\mu$, so $F(x) = 1 - e^{-x/\mu}$ on $[0,\infty)$ and $\mathbb{E}[X] = \mu$. A distribution is Pareto if $X \sim \mathrm{Pareto}(a, x_m)$ with $a > 1$, so $F(x) = 1 - (x_m/x)^a$ on $[x_m,\infty)$ and $\mathbb{E}[X] = ax_m/(a-1)$. To match means, we set $x_m = \mu(a-1)/a$. The parameter $a$ controls tail thickness: smaller $a$ (closer to 1) gives heavier tails; as $a \to \infty$, the Pareto concentrates near $x_m$ and its tail thins.} Our focus is on the \emph{expected improvement} from opening one more box when the current guarantee is $z$, that is,
\[
I(z) = \mathbb{E}[(X-z)^+] = \int_z^\infty (1-F(x))\,\mathrm{d}x.
\]
This measures the expected gain from search. When the current guarantee is $z$ and full scope is available, it is optimal to open the next box if and only if $I(z) > c$, that is, if the expected improvement exceeds the inspection cost. We further ask for which DM, the one drawing from a thin-tailed distribution or the one drawing from a heavy-tailed one, has a higher expansion threshold, and hence is willing to expand from a higher guarantee level.\footnote{At $z=0$, $I(0)=\mu$ for both distributions, since $I(0)=\mathbb{E}[X]=\mu$. Both offer the same expected improvement when the guarantee is zero. The distributions diverge as $z$ increases. The Pareto maintains higher expected improvement at large $z$ because its tail decays as a power law rather than exponentially.}

At stage 2, there are two distinct stopping thresholds. When the DM has full scope ($j = 3$), the DM opens box 3 if and only if $I(z) > c$. That is, the \emph{continuation threshold} $\bar z_3$ solves $I(\bar z_3) = c$ and equals the reservation value $r_2(3)$. When the DM is at the scope boundary ($j = 2$) and must pay the planning cost $\alpha$ to access stage 3, the DM expands if and only if $I(z) > c + \alpha$. That is, the \emph{expansion threshold} $\bar z_3^{\mathrm{expand}}$ solves $I(\bar z_3^{\mathrm{expand}}) = c + \alpha,$ and equals the reservation value $r_2(2)$.

\begin{claim}
\label{claim:thresholds}
Assume that $c<\mu$. Then the continuation thresholds are given by,
\begin{enumerate}
    \item[\normalfont(i)] $\bar z_3^T = \mu \ln\!\left(\mu/c\right)$ in the exponential case, and
    \item[\normalfont(ii)] $\bar z_3^H = x_m\left(\mu/(ac)\right)^{1/(a-1)}$ in the Pareto case, provided that $\bar z_3^H \ge x_m$.
\end{enumerate}
Assume further that $c+\alpha<\mu$. Then the expansion thresholds are obtained by replacing $c$ with $c+\alpha$ throughout,
\begin{enumerate}
    \item[\normalfont(i)] $\bar z_3^{T,\mathrm{expand}} = \mu \ln\!\left(\mu/(c+\alpha)\right)$ in the exponential case, and
    \item[\normalfont(ii)] $\bar z_3^{H,\mathrm{expand}} = x_m\left(\mu/(a(c+\alpha))\right)^{1/(a-1)}$ in the Pareto case.
\end{enumerate}
Furthermore, the reservation-value gap is given by,
\begin{enumerate}
    \item[\normalfont(i)] $r_2^T(3)-r_2^T(2)=\mu\ln\!\left(1+\frac{\alpha}{c}\right)$  in the exponential case, and
    \item[\normalfont(ii)]  $r_2^H(3)-r_2^H(2) = x_m\left(\frac{\mu}{ac}\right)^{1/(a-1)}\left[
    1-\left(\frac{c}{c+\alpha}\right)^{1/(a-1)}
    \right]$    in the Pareto case.
\end{enumerate}
\end{claim}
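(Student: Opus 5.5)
The plan is to reduce everything to computing the expected-improvement function $I(z)=\int_z^\infty(1-F(x))\,dx$ in closed form for each distribution, and then invert it at levels $c$ and $c+\alpha$.

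First we justify that at stage $2$ the reservation values solve $I(r)=c$ and $I(r)=c+\alpha$. Stage $3$ is terminal, so $V(3,w,\cdot)=w$. The continuation payoff at stage 2 with scope $j=3$ is $-c+\mathbb E[\max(z,X_3)]=z+I(z)-c$. With $j=2$, continuing forces $j'=3$ at cost $\psi(2,2,3)=\alpha$, so the continuation payoff is $z+I(z)-c-\alpha$. Hence $f_2^j(z)=I(z)-c$ or $I(z)-c-\alpha$. Since $I$ is continuous and strictly decreasing wherever $1-F>0$, with $I(0)=\mu$ (rewards are nonnegative) and $I(\infty)=0$, the assumption $c<\mu$ (respectively $c+\alpha<\mu$) gives a unique positive root. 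This is consistent with Definition~\ref{def:reservation} and Theorem~\ref{thm:reservation-exists}.

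Next come the closed forms. For the exponential distribution, $1-F(x)=e^{-x/\mu}$, so $I(z)=\mu e^{-z/\mu}$ for $z\ge 0$. Solving $\mu e^{-z/\mu}=c$ gives $\mu\ln(\mu/c)$, which is positive because $c<\mu$.

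For the Pareto distribution with $z\ge x_m$, $I(z)=\int_z^\infty (x_m/x)^a\,dx=x_m^a z^{1-a}/(a-1)$. Setting this equal to $c$ gives $z^{a-1}=x_m^a/((a-1)c)$. Using $x_m=\mu(a-1)/a$, i.e.\ $(a-1)=a x_m/\mu$, we get $z^{a-1}=x_m^{a-1}\mu/(ac)$, so $z=x_m(\mu/(ac))^{1/(a-1)}$.

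The one delicate point is the regime $z<x_m$. There $I(z)=\mu-z$ is linear, since $X\ge x_m>z$ a.s. The closed form is therefore valid only when the root lies in $[x_m,\infty)$. This is exactly the proviso $\bar z_3^H\ge x_m$, equivalently $\mu/(ac)\ge 1$. We note this is the main point to check. For the expansion threshold we implicitly need the analogous condition with $c+\alpha$, or else the root is $\mu-c-\alpha$ on the linear piece. I would state the formula under that proviso, which the claim inherits by "replacing $c$ with $c+\alpha$ throughout".

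Finally, the expansion thresholds follow by replacing $c$ with $c+\alpha$. The gaps are direct subtractions. In the exponential case, $\mu\ln(\mu/c)-\mu\ln(\mu/(c+\alpha))=\mu\ln(1+\alpha/c)$. In the Pareto case, factor out $x_m(\mu/(ac))^{1/(a-1)}$, which leaves $1-(c/(c+\alpha))^{1/(a-1)}$.
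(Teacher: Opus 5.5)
Your proposal is correct and follows the paper's proof essentially step for step: compute $I(z)$ in closed form for each distribution, invert it at levels $c$ and $c+\alpha$, and subtract. You integrate the survival function where the paper integrates against the density, which makes no difference. Your explicit Bellman reduction at stage $2$ is right, and so is your remark that the Pareto expansion formula, and hence the Pareto gap, additionally needs $c+\alpha\le \mu/a$, i.e.\ $\bar z_3^{H,\mathrm{expand}}\ge x_m$. Both points make explicit what the paper leaves to the surrounding text and to the phrase ``replacing $c$ with $c+\alpha$ throughout''.
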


The contrast between the two distributions is worth noting. For the exponential, the reservation-value gap $\mu\ln(1+\alpha/c)$ depends on the expansion-planning cost ratio $\alpha/c$ and the mean $\mu$, but \emph{not} on the level of the guarantee. This is a direct consequence of the memoryless property. The exponential tail looks the same at every level, so the cost of incomplete scope is constant across guarantee levels. For the Pareto, the gap is proportional to the continuation threshold $\bar z_3^H$ itself, multiplied by the factor
$1-\left(\dfrac{c}{c+\alpha}\right)^{1/(a-1)}.$

Heavy tails amplify the paid-scope effect and attenuate the guarantee
effect. The expected improvement function $I(z) = \mathbb{E}[(X - z)^+]$
decays polynomially under Pareto rewards rather than exponentially, so even
at moderate guarantees the option value of future access remains
substantial. Two consequences follow. First, the gap $r_2(3) - r_2(2)$,
which measures the absolute cost of incomplete scope, becomes arbitrarily
large as the tail index $a \to 1^+$. Second, the threshold itself shifts
upward, so a firm under heavy-tailed rewards is more willing to continue
at every guarantee level. The remaining-horizon effect is correspondingly
weaker, since the option value of later stages persists farther into the
sequence. The combination produces a stronger endogenous preference for
broader initial planning, paired with more aggressive ex-post scope
expansion when early draws are disappointing. Heavy-tailed reward
distributions thereby shift the balance of the three forces in favor of
broader scope, both ex ante and ex post. The exponential gap $\mu\ln(1+\alpha/c)$, by contrast, carries no $z$: by
memorylessness the conditional tail is the same at every guarantee, so the
gap is invariant to the guarantee level and the paid-scope effect is neither
amplified nor attenuated as $z$ rises.
\section{Conclusion}
\label{sec:conclusion}
We model sequential search with planning and show that when planning is costly, it affects the search and stopping decisions. In various economic environments such as R\&D exploration, the outcomes of search are ex-ante unknown and a pre-committed expenditure is required in order to explore a set of stages. During the search, the decision maker may need to revise the plan (and thereby, the scope) as new information about current and future expected payoffs from the search are revealed. We model the above by introducing costly ex-ante planning in an adaptive framework in a sequential Pandora's box problem (\cite{weitzman}). We find that there are three distinct economic channels through which costly planning affects search decisions: guarantee effect, the paid scope effect and the remaining horizon effect.  The paid scope effect, which is a consequence of costly planning and has not been addressed in prior literature counteracts the guarantee effect. We explain the impact of these effects on reservation values and the stopping decision. While we formulate and solve the optimal search problem using the standard dynamic programming approach, we show that there is an alternative equivalent formulation that explicitly models the planning decisions. We illustrate our results in two settings with convex costs and fat-tailed distribution of rewards. 
\appendix

\section{Proofs of main results}

\noindent\textbf{Proof of Theorem \ref{thm: optimalstrategy}.}
\begin{proof}
We proceed by backward induction on the stage index $i$. At the terminal stage $i=n$, the value function is given by $V(n,z,j)=z$, so stopping is indeed optimal. By Assumptions \ref{ass:integrability}-\ref{ass:adjustment}, these continuation payoffs are well-defined and finite. Suppose that for some $i < n$, the continuation value $V(i+1,\cdot,\cdot)$ is well-defined and an optimal decision rule exists from stage $i+1$ onward. At state $(i,z,j)$, the DM chooses between stopping, which yields payoff $z$, and continuing with some scope choice $j' \in \{i+1,\ldots,n\}$, which yields,
$$
-\psi(i,j,j') - c_{i+1}
+ \mathbb{E}\!\left[V(i+1,\max(z,X_{i+1}),j')\right].
$$ Since the set $\{i+1,\ldots,n\}$ is finite, the maximum over $j'$ is attained. Hence an optimal action exists at every state $(i,z,j)$. By backward induction, an optimal decision rule exists at every stage, and therefore an optimal policy exists for the full dynamic program. Fix $(i,j)$ and let $z' \ge z$. Since $\max(z',X_{i+1}) \ge \max(z,X_{i+1})$ pointwise, the continuation payoff under any feasible choice of $j'$ is weakly higher at $z'$ than at $z$. The stopping payoff is also weakly higher. Hence, by \eqref{eq:dp}, $V(i,z',j) \ge V(i,z,j)$. Similarly, fix $(i,z)$ and let $j' \ge j$. Since scope is already paid for at the state $(i,z,j)$, a larger current scope weakly expands the feasible continuation set and weakly lowers any future expansion cost. Therefore, any continuation plan feasible under $j$ can be achieved under $j'$ at weakly lower cost, while the stopping payoff remains the same. Hence, by \eqref{eq:dp}, $V(i,z,j') \ge V(i,z,j)$.
\end{proof}

\medskip

\noindent\textbf{Proof of Theorem \ref{thm:reservation-exists}.}
\begin{proof}

Fix a non-terminal stage $i<n$ and a feasible paid scope $j\in\{i,\ldots,n\}$. Define the (net) continuation payoff by
\[
f_i^j(z) \ =\ C_i^j(z) - z,
\]
where
\[
C_i^j(z) \ =\ \max_{j'\in\{i+1,\dots,n\}} \Bigl[-\psi(i,j,j') - c_{i+1} + \mathbb{E}\bigl[V(i+1,\max(z,X_{i+1}),j')\bigr]\Bigr]
\]
is the optimal continuation value at stage $i$ with assured $z$ and paid scope $j$. (The functions $V(i+1,\cdot,j')$ are well-defined by the standard finite-horizon DP construction starting from the terminal condition $V(n,z,n)=z$.) We proceed in four steps.

{Step 1: Continuity of $f_i^j$.}  
By the finite-horizon backward recursion defining $V$, for each scope $j'$ the inner map $z \mapsto \mathbb{E}[V(i+1,\max(z,X_{i+1}),j')]$ is continuous (the map $z \mapsto \max(z,X_{i+1})$ is continuous and the finite-moment condition $\mathbb{E}[|X_{i+1}|]<\infty$ allows application of the dominated-convergence theorem). The adjustment cost $\psi(i,j,j')$ and inspection cost $c_{i+1}$ are constants, so each candidate continuation payoff is continuous in $z$. The pointwise maximum over the finitely many \(j'\) is therefore continuous. Hence $C_i^j(z)$ is continuous and $f_i^j(z)$ (the difference of two continuous functions) is continuous.

{Step 2: Strict monotonicity of $f_i^j$.}  
For each  $j'$, the inner map $w \mapsto V(i+1,w,j')$ is non-decreasing and satisfies the Lipschitz-1 property
\[
0\leq V(i+1,w_1,j') - V(i+1,w_2,j') \leq w_1 - w_2 \quad \text{whenever } w_1 \geq w_2
\]
(by the finite-horizon construction). Therefore, for any \(h > 0\),
\[
V(i+1,\max(z+h,X_{i+1}),j') \leq V(i+1,\max(z,X_{i+1}),j') + h
\]
almost surely. Taking expectations shows that each candidate continuation payoff satisfies
\[
C^{j'}(z+h) \leq C^{j'}(z) + h.
\]
The finite maximum $C_i^j(z)$ inherits the same bound,
\[
C_i^j(z+h) \leq C_i^j(z) + h \quad \forall h > 0.
\]

\noindent Note that the slope cannot equal \emph{exactly} 1 everywhere. Suppose for contradiction that $C_i^j(z+h) = C_i^j(z) + h$ for every $z$ and every $h > 0$. This would require the marginal value of a higher guarantee $z$ to be transmitted one-for-one in \emph{every} realization. However,
\begin{itemize}
\item The fixed inspection cost $c_{i+1} > 0$ appears in every candidate continuation payoff, creating a constant downward pull.
\item After paying $c_{i+1}$ and observing $X_{i+1}$, the DM still faces a  future problem (boxes $i+2$ onward with possible scope expansions). By the finite-horizon construction, $V(i+1,\cdot,j')$ is \emph{strictly} greater than the guarantee on a positive-probability set of realizations (the remaining horizon carries positive option value almost surely). In those states, raising $z$ by $h$ increases the final payoff by \emph{strictly less than} $h$, because part of the marginal guarantee is crowded out by the positive option value.
\end{itemize}
Thus, there exists a positive measure set of $X_{i+1}$ (and therefore an interval of $z$) on which the (right) derivative of $C_i^j$ is \emph{strictly less than} 1. By continuity of $C_i^j$, we conclude that
\[
f_i^j(z+h) - f_i^j(z) = \bigl[C_i^j(z+h) - C_i^j(z)\bigr] - h < 0
\]
for all sufficiently small $h > 0$ (and, by the global Lipschitz bound, for all $h > 0$). Hence $f_i^j$ is strictly decreasing, only in regions where $f_i^j(z)\geq 0$. 

\noindent\textit{Step 3: Behavior as $z \to -\infty$.}  
\[
\max(z,X_{i+1}) \to X_{i+1} \quad \text{a.s.\ as } z \to -\infty.
\]
By the finite-horizon construction $V(i+1,\cdot,j')$ is finite-valued, so $C_i^j(z)$ converges to a finite constant. Meanwhile $-z \to +\infty$, hence
\[
f_i^j(z) = C_i^j(z) - z \to +\infty.
\]

\noindent\textit{Step 4: Behavior as $z\to+\infty$.} Let \[ M_{i+1}:=\max\{X_{i+1},X_{i+2},\ldots,X_n\}. \] By Assumption~\ref{ass:integrability}, $M_{i+1}$ is integrable. For any feasible next scope $j'$, the continuation payoff is bounded above by the payoff from accessing all remaining boxes without any future inspection or adjustment costs. Hence, \[ \mathbb{E}\!\left[ V(i+1,\max\{z,X_{i+1}\},j') \right] \leq \mathbb{E}\!\left[\max\{z,M_{i+1}\}\right] = z+\mathbb{E}\!\left[(M_{i+1}-z)^+\right]. \] Therefore, \[ f_i^j(z) = C_i^j(z)-z \leq -c_{i+1} + \mathbb{E}\!\left[(M_{i+1}-z)^+\right], \] where we use $\psi(i,j,j')\geq 0$. Since $M_{i+1}$ is integrable, \[ \mathbb{E}\!\left[(M_{i+1}-z)^+\right]\to 0 \qquad \text{as } z\to+\infty. \] Thus, for all sufficiently large $z$, \[ f_i^j(z)<0. \]

Note $f_i^j$ is continuous and strictly decreasing, $\lim_{z\to-\infty}f_i^j(z) = +\infty$, and $\lim_{z\to+\infty}f_i^j(z) < 0$, the intermediate-value theorem implies the existence of at least one root. Strict monotonicity over the \{$z:f_{i}^{j}(z)\geq 0$\} guarantees that this root is \emph{unique}. Finiteness follows immediately from the two limits.
\end{proof}

\medskip

\noindent\textbf{Proof of Theorem \ref{thm: optimal strategy}.}
\begin{proof}
This follows directly from the definition of $r_i(j).$ For
$z \ge r_i(j)$, $V(i,z,j) = z$ (stopping is optimal), and for
$z < r_i(j)$, $V(i,z,j) > z$ (continuation is strictly better). The optimality  
 follows from the Bellman equation \eqref{eq:dp}, and
$j'_i \ge j$ follows from Corollary~\ref{cor:contraction}.
\end{proof}

\medskip

\noindent\textbf{Proof of Theorem \ref{thm: equivalence DP and SAF}.}
\begin{proof}
We first prove that the scope adaptive payoffs are non-decreasing in the scope length. \begin{lemma}[Scope monotonicity of adaptive payoffs with costly planning]\label{lemma}
\label{prop:continuation-monotone}
Fix $i < n$, $m$, and $x$. If $j_1 \le j_2$, then
\[
    H^\phi_{i,j_1}(m, x) \le H^\phi_{i,j_2}(m, x).
\]
\end{lemma}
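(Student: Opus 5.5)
The plan is a direct comparison of the two recursions term by term, without any induction on the stage. The key observation is that in \eqref{eq:adaptive} the current scope $j$ enters $H^\phi_{i,j}(m,x)$ in only two places. It decides whether we are in the terminal case $i=j$, and it appears in the adjustment cost $\psi(i,j,j')$. The continuation terms $\mathbb{E}[H^\phi_{i+1,j'}(z,X_{i+1})]$ depend on $j'$ but not on $j$. So it suffices to show that every candidate continuation payoff available under $j_1$ is weakly dominated by the same candidate under $j_2$.

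Fix $i<n$, $m$, $x$, write $z=\max(m,x)$, and take feasible scopes $i\le j_1\le j_2$. I would split into two cases.

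\emph{Case $i=j_1$.} Here $H^\phi_{i,j_1}(m,x)=z$ by the terminal definition. Whether $j_2=i$ (equality) or $j_2>i$, the value $H^\phi_{i,j_2}(m,x)$ is either $z$ or a maximum whose first argument is $z$. Hence $H^\phi_{i,j_1}(m,x)\le H^\phi_{i,j_2}(m,x)$.

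\emph{Case $i<j_1\le j_2$.} Both values are given by the recursion, with the same stopping term $z$, the same index set $\{i+1,\ldots,n\}$ and the same continuation expectations. It therefore suffices to show that
\[
\psi(i,j_2,j')\le\psi(i,j_1,j') \qquad \text{for every } j'\in\{i+1,\ldots,n\}.
\]
Given this, each bracket under $j_2$ weakly exceeds the corresponding bracket under $j_1$, so the suprema are ordered and the outer maxima with $z$ are ordered.

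The main obstacle is this monotonicity of $\psi$ in the current scope. Assumption~\ref{ass:adjustment} alone does not impose it; it is only asserted informally in the discussion of the paid-scope effect and in the proof of Theorem~\ref{thm: optimalstrategy}. I would derive it from Assumptions~\ref{ass:adjustment} and~\ref{ass:submodular}, splitting on $j'$.
\begin{itemize}
\item If $j'\le j_2$, then $\psi(i,j_2,j')=0\le\psi(i,j_1,j')$ by Assumption~\ref{ass:adjustment}.
\item If $j'>j_2$, apply submodularity with the pairs $j_1\le j_2$ and $j_2\le j'$. This gives
\[
\psi(i,j_1,j')-\psi(i,j_1,j_2)\ge\psi(i,j_2,j')-\psi(i,j_2,j_2)=\psi(i,j_2,j').
\]
Since $\psi(i,j_1,j_2)\ge 0$, it follows that $\psi(i,j_1,j')\ge\psi(i,j_2,j')$.
\end{itemize}

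In the costless benchmark $\psi\equiv 0$, which is what Theorem~\ref{thm: equivalence DP and SAF} ultimately needs, this step is trivial. If submodularity is not to be invoked, I would instead state weak monotonicity of $\psi$ in $j$ explicitly as the maintained hypothesis already used in the paid-scope discussion.
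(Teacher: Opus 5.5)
Your proposal is correct and follows the paper's proof: for each $k\in\{i+1,\ldots,n\}$ you compare the continuation brackets using $\psi(i,j_2,k)\le\psi(i,j_1,k)$, then take the supremum over $k$ and the maximum with $z$. The paper simply asserts this monotonicity of $\psi$ in the current scope, which does not follow from Assumption~\ref{ass:adjustment} alone and without which the lemma can fail; your derivation of it from Assumptions~\ref{ass:adjustment} and~\ref{ass:submodular}, and your separate treatment of the terminal case $i=j_1$, make explicit what the paper leaves implicit.
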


\begin{proof}
Let $z = \max(m, x)$. For any 
$k \in \{i+1,\ldots,n\}$,
\[
    -\psi(i,j_1,k) - c_{i+1}
    + \mathbb{E}\!\left[H^\phi_{i+1,k}(z, X_{i+1})\right]
    \le
    -\psi(i,j_2,k) - c_{i+1}
    + \mathbb{E}\!\left[H^\phi_{i+1,k}(z, X_{i+1})\right],
\]
because $\psi(i,j_2,k) \le \psi(i,j_1,k)$ whenever $j_1 \le j_2$.
Taking the supremum over $k$ and then the maximum with stopping payoff
$z$ yields the result. Note that the above also holds for $\phi=0$.
\end{proof}

\begin{corollary}[Contraction is weakly dominated]
\label{cor:contraction}
Fix $i < n$ and current scope $j \ge i+1$. Conditional on continuing,
choosing $j' = j$ weakly dominates every $j' < j$.
\end{corollary}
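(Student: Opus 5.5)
Compare the continuation payoff at $j'=j$ directly with the continuation payoff at any $j'<j$. Fix $i<n$, the current scope $j\ge i+1$, and any guarantee $z$. Conditional on continuing, choosing $j'$ yields
\[
-\psi(i,j,j') - c_{i+1} + \mathbb{E}\!\left[V(i+1,\max(z,X_{i+1}),j')\right],
\]
or the analogous expression with $H^\phi_{i+1,j'}$ in the SAF formulation. The plan is to show that each of the two $j'$-dependent pieces is weakly better at $j'=j$ than at any feasible $j'<j$.

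\textbf{Cost term.} Assumption~\ref{ass:adjustment} states that the adjustment cost is zero whenever $j'\le j$. Hence $\psi(i,j,j')=\psi(i,j,j)=0$ for every $j'\in\{i+1,\ldots,j\}$, so the cost terms are identical. Contraction saves nothing, and keeping the current scope costs nothing.

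\textbf{Continuation term.} For $j'\le j$, use monotonicity in the scope argument. In the DP formulation, Proposition~\ref{prop: CS}(i) (equivalently the scope-monotonicity step in the proof of Theorem~\ref{thm: optimalstrategy}) gives $V(i+1,w,j')\le V(i+1,w,j)$ for every $w$. Apply this pointwise at $w=\max(z,X_{i+1})$ and take expectations; integrability holds by Assumption~\ref{ass:integrability}. For the SAF formulation, Lemma~\ref{lemma} gives $H^\phi_{i+1,j'}(z,x)\le H^\phi_{i+1,j}(z,x)$ for all $x$, and integrating over $X_{i+1}$ gives the same comparison. Combining this with the equal cost terms, the payoff from $j'=j$ weakly exceeds the payoff from every $j'<j$, which is exactly weak dominance. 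It also follows that, conditional on continuing, the argmax in Definition~\ref{defn: optimal strategy} can always be chosen with $j'_i\ge j$.

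\textbf{Main obstacle.} The only delicate point is that the monotonicity must hold at stage $i+1$ for scopes $j'\ge i+1$, including the boundary case $j'=i+1$. Lemma~\ref{lemma} is stated for $i<n$, and at $i+1=n$ both values equal $\max(z,x)$, so the inequality is trivial there. One further check: Lemma~\ref{lemma}'s proof uses $\psi(i,j_2,k)\le\psi(i,j_1,k)$ for $j_1\le j_2$, so I will verify that this monotonicity of $\psi$ in current scope is implicitly assumed. If it is not, I will instead argue by induction that any continuation plan feasible from scope $j'$ is replicable from scope $j$ at no greater cost.
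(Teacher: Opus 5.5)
Your main argument is the paper's own proof. Zero adjustment cost for $j'\le j$ makes the stage-$i$ costs coincide, and scope monotonicity of the next-stage value does the rest (Lemma~\ref{lemma} for $H^\phi$, or the monotonicity step in the proof of Theorem~\ref{thm: optimalstrategy} behind Proposition~\ref{prop: CS}(i) for $V$).

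The concern you flag is genuine, however, and your fallback cannot remove it. By the Bellman recursion, $V(i+1,w,\cdot)$ depends on the current scope only through the one-step cost $\psi(i+1,\cdot,k)$; the stage-$(i+2)$ values do not involve it. So there is nothing for an induction to propagate, and $V(i+1,w,j')\le V(i+1,w,j)$ rests entirely on comparing $\psi(i+1,j',k)$ with $\psi(i+1,j,k)$ for $k>j$. Replication fails for the same reason. A plan from scope $j'$ that expands to $k>j$ pays $\psi(\cdot,j',k)$, while from scope $j$ the only route to $k$ is to pay $\psi(\cdot,j,k)$ at that stage or later, and Assumption~\ref{ass:adjustment} does not order these.

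In fact the statement is false under Assumptions~\ref{ass:integrability}--\ref{ass:adjustment} alone. Take $n=4$, $i=1$, $j=3$, $m_0=0$, and $X_1=X_2=X_3=0$, $X_4=B$ deterministically, with all $c_k=1$. Set $\psi(2,2,4)=\epsilon$ and $\psi(2,3,4)=\psi(3,3,4)=M$ with $M>B>2+\epsilon$, and give the remaining expansions any positive values. Then $V(2,0,2)=B-2-\epsilon>0=V(2,0,3)$. So at stage $1$, contracting to $j'=2$ yields $B-3-\epsilon$ and strictly beats $j'=3$, which yields $-1$.

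The corollary therefore needs the hypothesis that $\psi(i,\cdot,k)$ is nonincreasing in the current scope. The paper uses it tacitly: in the proof of Lemma~\ref{lemma}, and in the assertion in the proof of Theorem~\ref{thm: optimalstrategy} that a larger scope ``weakly lowers any future expansion cost''. Its proof therefore has the same gap. State the hypothesis explicitly rather than trying to argue around it. It holds for the specifications $\psi(i,j,j')=\phi(j'-i)-\phi(j-i)$ and $\psi(i,j,j')=g(j'-j)$ used in the paper. It also follows from Assumption~\ref{ass:submodular} combined with Assumption~\ref{ass:adjustment}: setting $j_1'=j_2\le k=j_2'$ gives $\psi(i,j_1,k)\ge\psi(i,j_2,k)+\psi(i,j_1,j_2)\ge\psi(i,j_2,k)$. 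With this hypothesis, your two-step argument is complete. Your boundary case $i+1=n$ never arises, since $i+1\le j'<j\le n$ forces $i+1\le n-1$.
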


\begin{proof}
Since $\psi(i,j,j') = 0$ for all $j' \le j$, the only difference
between $j' = j$ and $j' < j$ is the next-period value. By
Proposition~\ref{prop:continuation-monotone},
$H^\phi_{i+1,j'}(z, X_{i+1}) \le H^\phi_{i+1,j}(z, X_{i+1})$.
\end{proof}

\noindent For any initial scope $j \in \{1,\ldots,n\}$, define
\begin{equation}
\label{eq:W}
    W^\phi(j, m_0) = -\phi(j) - c_1 +
    \mathbb{E}\!\left[H^\phi_{1,j}(m_0, X_1)\right].
\end{equation}
The DM's initial continuation value before the search begins is
\begin{equation}
\label{eq:A1}
    A^\phi_1(m_0) = \sup_{j \in \{1,\ldots,n\}} W^\phi(j, m_0).
\end{equation}
The DM initiates search when $A^\phi_1(m_0) > m_0$. Subsequently, the value $A_i^\phi(z)$ at each stage computed similarly would determine the continuation/stopping decisions. 
In the next proposition we assume $\psi=0$ and correspondingly suppress the notation $H^\phi_{i,j}(m,x)$ to $H_{i,j}(m,x)$. 
\begin{lemma}\label{lemma: equiv SAF}
Under Assumption \ref{ass:integrability}, for every $i$ and $m$,
\[
-c_i + E\big[ V_{i+1}(\max(m, X_i)) \big]
= -c_i + E\left[ \sup_{j \in \{i,\dots,n\}} H_{i,j}(m, X_i) \right].
\]
Consequently,
$V_i(m) = \max\{m, A_i(m)\}.$
\end{lemma}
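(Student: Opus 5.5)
The plan is to prove the identity by backward induction on $i$, from $i=n$ down, once $\psi=\phi=0$. The key idea is that without costs, the SAF recursion for $H_{i,j}$ with $j\ge i$ collapses to the DP recursion for $V_i$. The scope index $j$ only matters through the terminal case $i=j$: there $H_{j,j}$ forces stopping, whereas for $i<j$ the DM may revise to any $j'\in\{i+1,\ldots,n\}$ at zero cost. I would first record the monotonicity from Lemma~\ref{lemma} (with $\psi=0$), namely $H_{i,j_1}(m,x)\le H_{i,j_2}(m,x)$ for $j_1\le j_2$. This gives, for any $i<n$,
\[
\sup_{j\in\{i,\ldots,n\}} H_{i,j}(m,x)=H_{i,n}(m,x).
\]
For $i=n$ the only term is $H_{n,n}(m,x)=\max(m,x)$. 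So it suffices to show the pointwise identity $V_{i+1}(\max(m,x))=H_{i,n}(m,x)$ for every $x$, where $V_{i+1}(w)$ denotes the plan-neutral DP value after box $i$ is observed with guarantee $w$. Integrating over $X_i\sim F_i$ and subtracting $c_i$ then yields the displayed equality. Assumption~\ref{ass:integrability} guarantees that both expectations are finite, since each function is sandwiched between $\max(m,X_i)$ and $\max(m,X_i,\ldots,X_n)$.

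The pointwise identity will come from induction on $i$ downward. In the base case $i=n$, both sides equal $\max(m,x)$, because $V$ at the terminal stage is the guarantee. For the inductive step at $i<n$, set $z=\max(m,x)$ and expand $H_{i,n}$ using \eqref{eq:adaptive} with $\psi=0$:
\[
H_{i,n}(m,x)=\max\Bigl\{z,\ -c_{i+1}+\sup_{j'\in\{i+1,\ldots,n\}}\mathbb{E}\bigl[H_{i+1,j'}(z,X_{i+1})\bigr]\Bigr\}.
\]
Monotonicity lets me move the supremum inside the expectation: the supremum is attained at $j'=n$ pointwise, so $\sup_{j'}\mathbb{E}[H_{i+1,j'}]=\mathbb{E}[\sup_{j'}H_{i+1,j'}]=\mathbb{E}[H_{i+1,n}]$. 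The induction hypothesis then replaces $H_{i+1,n}(z,X_{i+1})$ with $V_{i+2}(\max(z,X_{i+1}))$. The right-hand side becomes $\max\{z,-c_{i+1}+\mathbb{E}[V_{i+2}(\max(z,X_{i+1}))]\}$, which is exactly the Bellman equation \eqref{eq:dp} with $\psi=0$. In that equation the scope argument is irrelevant, because $V(\cdot,\cdot,j)$ does not depend on $j$ when adjustment is free. One quick induction establishes this independence: the supremum over $j'$ in \eqref{eq:dp} no longer involves $j$. Hence the expression equals $V_{i+1}(z)$.

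For the consequence $V_i(m)=\max\{m,A_i(m)\}$, I define $A_i(m)=\sup_j[-c_i+\mathbb{E}[H_{i,j}(m,X_i)]]$, by analogy with \eqref{eq:A1} with $\phi=0$. By the same monotonicity argument, this equals $-c_i+\mathbb{E}[\sup_j H_{i,j}(m,X_i)]$. The Bellman equation gives $V_i(m)=\max\{m,-c_i+\mathbb{E}[V_{i+1}(\max(m,X_i))]\}$, so substituting the first part finishes the proof.

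The main obstacle is bookkeeping rather than analysis. The notations $V_{i+1}$ and $V(i,z,j)$ index stages with an offset of one, and $H_{i,j}$ is defined with $i$ being the box just observed. Care is needed so that the terminal case $H_{j,j}$ with $j<n$ is not confused with the genuine terminal stage $n$. The monotonicity reduction to $j=n$ is what makes those truncated branches harmless. The sup–expectation interchange would otherwise need justification, but it is trivial here because the maximizer $j'=n$ does not depend on the realization.
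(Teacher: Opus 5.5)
Your proposal is correct and follows essentially the same route as the paper. Both argue by backward induction on the pointwise identity $\sup_{j} H_{i,j}(m,x)=V_{i+1}(\max(m,x))$, use scope monotonicity (Lemma~\ref{lemma}) to make the truncated branches $H_{j,j}$, $j<n$, harmless, and then integrate over $X_i$. The only differences are cosmetic. You collapse the supremum to $j=n$ up front and justify the sup--expectation interchange directly from \eqref{eq:adaptive} as written, whereas the paper rewrites the recursion with the supremum over $\{i+1,\dots,j\}$ already inside the expectation and then enlarges the index set. One small correction: your integrability sandwich holds after conditioning on $X_i$, with upper bound $\mathbb{E}[\max(m,x,X_{i+1},\dots,X_n)]$, not pointwise. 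This does not affect the argument.
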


\begin{proof}
Define the DP value function as follows:
\[
V_{n+1}(m) = m \quad \text{for all } m.
\]
For $i = n, n-1, \dots, 1$ define
$V_i(m) = \max\{m, -c_i + E[V_{i+1}(\max(m, X_i))]\}.$

The above recursion captures the stop/continue decision at each stage: $V_i(m)=m$ denotes the decision to stop, while $V_i(m)=-c_{i}+E[V_{i+1}(\max(m, X_i))]$ indicates continuation of search (observe $X_i$).  Next, we show the equivalence between conditional payoffs in  scope adaptive search and the above DP value function.
We prove by backward induction that for all $i$ and all $(m,x)$,
\begin{equation}\label{eq:pointwise}
\sup_{j \in \{i,\dots,n\}} H_{i,j}(m,x) = V_{i+1}(\max(m,x)).
\end{equation}

\noindent\textbf{Base case ($i=n$).}  $H_{n,n}(m,x)=\max(m,x)$ and $V_{n+1}(m)=m$.
\[
\sup_{j\ge n} H_{n,j}(m,x) = H_{n,n}(m,x) = \max(m,x) = V_{n+1}(\max(m,x)).
\]

\noindent\textbf{Induction step.} Assume \eqref{eq:pointwise} holds for all indices $> i$. Fix $(m,x)$ and write $z = \max(m,x)$.

For any $j>i$, by the recursive definition,
\[
H_{i,j}(m,x) = \max\left\{ z,\; -c_{i+1} + E \left[ \sup_{k \in \{i+1,\dots,j\}} H_{i+1,k}(z, X_{i+1}) \right] \right\}.
\]
By Lemma~\ref{lemma}, enlarging the index set to $\{i+1,\dots,n\}$ weakly increases the inner supremum. Thus,
\[
\sup_{j>i} H_{i,j}(m,x)
= \max\left\{ z,\; -c_{i+1} + E \left[ \sup_{k \in \{i+1,\dots,n\}} H_{i+1,k}(z, X_{i+1}) \right] \right\}.
\]
By the induction hypothesis applied at $(i+1,z)$,
\[
\sup_{k \in \{i+1,\dots,n\}} H_{i+1,k}(z, X_{i+1}) = V_{i+2}(\max(z,X_{i+1})),
\]
so
\[
\sup_{j>i} H_{i,j}(m,x)
= \max\left\{ z,\; -c_{i+1} + E \left[ V_{i+2}(\max(z, X_{i+1})) \right] \right\}
= V_{i+1}(z).
\]
Including the term $j=i$ (where $H_{i,i}(m,x)=z$) does not change the supremum, so \eqref{eq:pointwise} holds.

Taking expectations of \eqref{eq:pointwise} with respect to $X_i$ and subtracting $c_i$ yields
\[
= -c_i + E\left[ \sup_{j \in \{i,\dots,n\}} H_{i,j}(m, X_i) \right].
\]
Using the definition of $V_i(m)$ and $A_i(m)$ in the above, we get $V_i(m) = \max\{m, A_i(m)\}$.
\end{proof}

\begin{lemma}\label{lemma:equiv}
Under Assumptions \ref{ass:integrability} and \ref{ass:strictcost}, the reservation values $r_i$ (where $V_i(r_i) = r_i$) are the same under the DP strategy and the adaptive scope strategy.
\end{lemma}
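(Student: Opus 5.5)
The plan is to read the claim off Lemma~\ref{lemma: equiv SAF}, which gives the pointwise identity $V_i(m)=\max\{m,A_i(m)\}$, where $A_i(m)=-c_i+E[\sup_{j\in\{i,\dots,n\}}H_{i,j}(m,X_i)]$ is the adaptive-scope continuation value. The DP continuation value is $-c_i+E[V_{i+1}(\max(m,X_i))]$, and the same lemma shows it equals $A_i(m)$ for every $m$. The reservation value is defined by indifference between stopping and continuing. So under DP it solves $m=-c_i+E[V_{i+1}(\max(m,X_i))]$, and under the adaptive-scope strategy it solves $m=A_i(m)$. Since the two right-hand sides coincide as functions of $m$, the two root sets coincide. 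It remains to show that each set is a single point, so that "the reservation value" is well defined in both formulations and the two are equal.

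For uniqueness, I will work with $g_i(m):=A_i(m)-m$ and mirror Steps 1--4 of the proof of Theorem~\ref{thm:reservation-exists}, now specialised to $\psi=\phi=0$. The steps are as follows.

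\begin{enumerate}
\item[(a)] \emph{Continuity.} $g_i$ is continuous by dominated convergence, which uses Assumption~\ref{ass:integrability} and the fact that $V_{i+1}$ is $1$-Lipschitz.
\item[(b)] \emph{Behaviour at $-\infty$.} $g_i(m)\to+\infty$ as $m\to-\infty$, because $E[V_{i+1}(\max(m,X_i))]$ stays bounded below by $E[V_{i+1}(X_i)]$, which is finite.
\item[(c)] \emph{Behaviour at $+\infty$.} $g_i(m)\le -c_i+E[(M_i-m)^+]$ with $M_i=\max_{k\ge i}X_k$, so $g_i(m)<0$ for large $m$ by Assumption~\ref{ass:strictcost} and integrability.
\item[(d)] \emph{Monotonicity.} $V_{i+1}$ is nondecreasing and $1$-Lipschitz, so $m\mapsto E[V_{i+1}(\max(m,X_i))]-m$ is nonincreasing. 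Hence $g_i$ is nonincreasing.
\end{enumerate}

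Steps (a)--(c) and the intermediate value theorem give existence of a root.

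The main obstacle is strictness: I need that $g_i$ cannot be flat at level zero. I would try to show that, near any root, the slope of $m\mapsto E[V_{i+1}(\max(m,X_i))]$ is strictly less than one. That slope equals $E[V_{i+1}'(\max(m,X_i))\mathbf 1\{X_i<m\}]+E[\ldots\mathbf 1\{X_i\ge m\}]$-type terms. Where $X_i>m$, raising $m$ does not change the payoff. So strictness follows once $\Pr(X_i>m)>0$ at the root. If instead $\Pr(X_i>m)=0$, then $E[\max(m,X_i)]=m$, so continuing yields at most $m-c_i<m$; hence $g_i(m)<0$, which contradicts $m$ being a root. Therefore at any root $X_i$ exceeds $m$ with positive probability. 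On that event the payoff is locally insensitive to $m$, so $g_i$ has right-slope strictly below $0$ there.

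Since $g_i$ is nonincreasing and strictly decreasing at every zero, it has exactly one zero. Combined with the identity of the two continuation maps from Lemma~\ref{lemma: equiv SAF}, this gives $r_i^{DP}=r_i^{SAF}$.
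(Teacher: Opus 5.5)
Your identification step is exactly the paper's proof. The paper reads the claim off Lemma~\ref{lemma: equiv SAF}: $V_i(m)=\max\{m,A_i(m)\}$ with $A_i$ equal to the DP continuation value, so the indifference points coincide. It takes uniqueness from Theorem~\ref{thm:reservation-exists} rather than reproving it. Your formulation via $A_i(m)=m$ is the correct reading, since $V_i(r)=r$ holds for every $r$ at or above the threshold. That part is fine, and the coincidence of root sets does not depend on what follows.

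The gap is in the uniqueness argument you add. In the case $\Pr(X_i>m)=0$ you replace $E[V_{i+1}(\max(m,X_i))]$ by $E[\max(m,X_i)]$ and conclude that continuing yields at most $m-c_i$. In the ordered structure, however, opening box $i$ is also the only route to boxes $i+1,\dots,n$. When $X_i\le m$ a.s., continuing yields $-c_i+V_{i+1}(m)$, and the option value $V_{i+1}(m)-m$ can exceed $c_i$. For example, take $n=2$, $c_1=c_2=1$, $X_1$ uniform on $[0,1]$, and $X_2\in\{0,100\}$ with probability $1/2$ each. Then $V_2(m)-m=\max\{0,49-m/2\}$ for $m\ge 0$, so $g_1(m)=-1+\max\{0,49-m/2\}$ for $m\ge 1$. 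Its unique root $m^*=96$ satisfies $\Pr(X_1>96)=0$. Your case analysis declares such a root impossible. Near it the $X_1$-term is flat, so strictness must come from the later boxes, which your argument never uses. Your bound is valid only at $i=n$, where $V_{n+1}(w)=w$.

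To repair it, set $O_k(w):=V_k(w)-w=\max\{0,g_k(w)\}$. Then $g_i(m)=-c_i+S_i(m)$ with $S_i(m):=E[(X_i-m)^+]+E[O_{i+1}(\max(m,X_i))]$. Prove by backward induction from $O_{n+1}\equiv 0$ that each $O_k$ is nonincreasing and strictly decreasing wherever positive, i.e.\ $O_k(w')<O_k(w)$ whenever $w'>w$ and $O_k(w)>0$.

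For the step at index $i$, take $S_i(m)>0$ and $m'>m$. If $\Pr(X_i>m)>0$, the first term of $S_i$ drops strictly and the second drops weakly. If instead $X_i\le m$ a.s., then $S_i(m)=O_{i+1}(m)>0$ and $S_i(m')=O_{i+1}(m')<O_{i+1}(m)$ by the induction hypothesis.

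Since $c_i>0$, every root $m^*$ of $g_i$ lies in $\{S_i>0\}$, so $g_i(m')<0$ for all $m'>m^*$. This gives uniqueness. Because $O_i=\max\{0,S_i-c_i\}$, it also propagates the induction to $O_i$.
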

\begin{proof}
By Lemma~\ref{lemma: equiv SAF}, for every $m$,
$V_i(m) = \max\{m, A_i(m)\}.$
\noindent The reservation $r_i$ solves $V_i(r_i)=r_i$, which is equivalent to $A_i(r_i)=r_i$. The function $A_i$ is defined in terms of the fully adaptive values $H_{i,j}$; hence the reservation defined by the DP formulation and by the scope adaptive formulation are the same.
\end{proof}

 Lemmas \ref{lemma: equiv SAF} and \ref{lemma:equiv} respectively establish that the payoffs and reservation values in DP and SAF are equivalent. Therefore, the optimal strategy in DP will also be optimal for SAF. When $\phi \equiv 0$, all adjustment costs vanish. By
Lemma~\ref{prop:continuation-monotone}, $H^\phi_{1,j}(m_0,X_1)$
is weakly increasing in $j$. Hence $W^\phi(j,m_0)$ is
increasing in $j$ and $n$ is an optimal scope.
\end{proof}
\noindent\textbf{Proof of Proposition 1.}
\begin{proof}
Fix a stage $i$ and let $j' \ge j$. By definition of the reservation value, $V(i,r_i(j),j)=r_i(j),$
so $V(i,r_i(j),j)-r_i(j)=0.$
Since $V(i,z,j)$ is weakly increasing in $j$, it follows that
$V(i,r_i(j),j') \ge V(i,r_i(j),j)=r_i(j),$
and hence
$V(i,r_i(j),j')-r_i(j)\ge 0.
$
Now consider the function $g_{j'}(z)=V(i,z,j')-z.$
By assumption, $g_{j'}(z)$ is weakly decreasing in $z$. Since $g_{j'}(r_i(j))\ge 0$ and $r_i(j')$ is the unique value satisfying
$g_{j'}(r_i(j'))=V(i,r_i(j'),j')-r_i(j')=0,
$
it follows that the zero of $g_{j'}$ cannot lie strictly below $r_i(j)$. Therefore,
$r_i(j') \ge r_i(j).$
Hence $r_i(j)$ is weakly increasing in $j$.
We prove each part by showing that a higher cost lowers the continuation value, and therefore lowers the unique fixed point.

\medskip

\noindent \textit{Part (i): Inspection costs.}
Fix a feasible scope $j$. Let
\[
g^c_j(z)=V^c(i,z,j)-z
\qquad\text{and}\qquad
g^{\tilde c}_j(z)=V^{\tilde c}(i,z,j)-z.
\]
Since $\tilde c_{i+1} \ge c_{i+1}$, the continuation payoff under $\tilde c_{i+1}$ is weakly lower than under $c_{i+1}$ for every feasible next scope choice $j'$. Hence, $V^{\tilde c}(i,z,j) \le V^c(i,z,j) \text{for all } z,$
and therefore
$g^{\tilde c}_j(z) \le g^c_j(z) \text{ for all } z.$ By definition of the reservation value,
$g^c_j(r_i^c(j))=V^c(i,r_i^c(j),j)-r_i^c(j)=0.$
Hence,
$g^{\tilde c}_j(r_i^c(j)) \le g^c_j(r_i^c(j))=0.$
Since $g^{\tilde c}_j(z)$ is weakly decreasing in $z$, its unique zero cannot lie to the right of $r_i^c(j)$. Hence,
$r_i^{\tilde c}(j) \le r_i^c(j).$

\medskip

\noindent \textit{Part (ii): Adjustment costs.}
Fix a feasible scope $j$. Let
\[
g^\psi_j(z)=V^\psi(i,z,j)-z
\qquad\text{and}\qquad
g^{\tilde \psi}_j(z)=V^{\tilde \psi}(i,z,j)-z.
\]
If $\tilde \psi(i,j,j') \ge \psi(i,j,j')$ for all feasible $(i,j,j')$, then for every feasible continuation choice $j'$, the continuation payoff under $\tilde \psi$ is weakly lower than under $\psi$. Taking the supremum over $j'$ preserves this ordering, so
$
V^{\tilde \psi}(i,z,j) \le V^\psi(i,z,j) \text{ for all } z,
$
and therefore
$ g^{\tilde \psi}_j(z) \le g^\psi_j(z) \text{ for all } z.
$
By definition of the reservation value,
$
g^\psi_j(r_i^\psi(j))=V^\psi(i,r_i^\psi(j),j)-r_i^\psi(j)=0.
$
Thus,
$
g^{\tilde \psi}_j(r_i^\psi(j)) \le g^\psi_j(r_i^\psi(j))=0.
$
Since $g^{\tilde \psi}_j(z)$ is weakly decreasing in $z$, its unique zero cannot lie to the right of $r_i^\psi(j)$. Hence, $ r_i^{\tilde \psi}(j) \le r_i^\psi(j).$

\noindent \textit{Part (iii):} We write the objective as
\[
    \Gamma_i(z;\,j,j')
    \;=\;
    \underbrace{-\psi(i,j,j')}_{\text{depends on }(j,j')}
    \;+\;
    \underbrace{-c_{i+1}
    + \mathbb E\!\left[V(i{+}1,\max(z,X_{i+1}),j')\right]}_{\text{depends
    on }j'\text{ only}}.
\]
A function of the form $f(j,j') + g(j')$ has increasing differences
in $(j,j')$ if and only if $f(j,j')$ does. By
Assumption~\ref{ass:submodular}, $-\psi(i,j,j')$ has increasing
differences in $(j,j')$. Hence $\Gamma_i(z;\,j,j')$ has increasing
differences in $(j,j')$. The feasible set $\{i+1,\ldots,n\}$ is a finite lattice that does
not depend on $j$. Topkis's monotonicity theorem then implies that
$J_i(z,j) = \arg\max_{j'}\, \Gamma_i(z;\,j,j')$ is weakly
increasing in $j$ in the strong set order. Monotonicity of the
minimal and maximal selections is therefore an immediate consequence. \end{proof}

\noindent\textbf{Proof of Proposition 2.}

\begin{proof}
We first establish the pathwise containment
\begin{equation}
\label{eq:containment}
\{\tau^* > j_0^*\} \;\subseteq\; \{z_{j_0^*} < r_{j_0^*}(n)\},
\end{equation}
where $z_{j_0^*} = \max(m_0, X_1, \dots, X_{j_0^*})$ is the guarantee at
stage $j_0^*$. Fix a realization in $\{\tau^* > j_0^*\}$. By definition of $\tau^*$, the
firm continued at stage $j_0^*$ rather than stopping with the guarantee
$z_{j_0^*}$. Let $j$ denote the firm's paid scope at stage $j_0^*$ along
this path; since the firm only expands (never reduces) paid commitments,
$j \in \{j_0^*, j_0^* + 1, \dots, n\}$. By Theorem~3, continuation at
state $(j_0^*, z_{j_0^*}, j)$ occurs if and only if $z_{j_0^*} < r_{j_0^*}(j)$.
By Proposition~1(i), $r_{j_0^*}(j) \leq r_{j_0^*}(n)$. Hence
$z_{j_0^*} < r_{j_0^*}(n)$, establishing~\eqref{eq:containment}. Taking probabilities and using $z_{j_0^*} = \max(m_0, X_1, \dots, X_{j_0^*})$, $\Pr(\tau^* > j_0^*) \;\leq\; \Pr\!\left(\max(m_0, X_1, \dots, X_{j_0^*})
< r_{j_0^*}(n)\right).$
If $m_0 \geq r_{j_0^*}(n)$, the right-hand side equals zero and the bound
holds trivially. If $m_0 < r_{j_0^*}(n)$, the event reduces to
$\{X_i < r_{j_0^*}(n) \text{ for all } i = 1, \dots, j_0^*\}$, and by the
independence of $X_1, \dots, X_n$ (Assumption~1),
$$
\Pr\!\left(\max(m_0, X_1, \dots, X_{j_0^*}) < r_{j_0^*}(n)\right)
\;=\; \prod_{i=1}^{j_0^*} \Pr(X_i < r_{j_0^*}(n))
\;=\; \prod_{i=1}^{j_0^*} F_i\!\left(r_{j_0^*}(n)^{-}\right).
$$
Combining the two cases yields~\eqref{eq:bound}.
\end{proof}

\noindent\textbf{Proof of Claim 1.}
\begin{proof}
At any reachable state $(i, z, j)$ with $j < n$ and $z \geq m_0$,
the marginal value of one-unit expansion is bounded above by
   $$ \mathbb{E}\bigl[(X_{j+1} - z)^+\bigr]
    = \sigma \cdot L\!\left(\frac{z - \mu_{j+1}}{\sigma}\right)
    \leq \sigma \cdot L\!\left(\frac{m_0 - \mu_n}{\sigma}\right),$$
using monotonicity of $L$ and $\mu_{j+1} \leq \mu_n$ under
FOSD-increasing. By hypothesis, this bound is dominated by $g(1)$,
so single-step expansion is never strictly preferred to no
expansion.

For multi-step expansion, observe that under convex $g$ with
$g(0) = 0$, the increments $g(\ell) - g(\ell - 1)$ are weakly
increasing in $\ell$, so
$g(k) = \sum_{\ell=1}^{k} \bigl[g(\ell) - g(\ell-1)\bigr]   \geq k \cdot g(1).$
The marginal value of $k$-unit expansion is bounded above by $k$
times the single-step marginal value, namely
$k \cdot \sigma L((m_0 - \mu_n)/\sigma) \leq k \cdot g(1) \leq g(k)$.
Hence multi-step expansion is also dominated. Consequently, the DM never expands on the optimal path, the final
scope satisfies $j^\dagger = j_0^*$ deterministically, and
$\tau^* \leq j_0^*$ pathwise. The strict inequality
$\mathbb{E}[\tau^*] < j_0^*$ when $j_0^* \geq 2$ follows because
under non-degenerate normal rewards, at every stage $i \in \{1,
\ldots, j_0^* - 1\}$ there is positive probability that the
realized guarantee $z_i$ exceeds $r_i(j_0^*)$, triggering stopping
at stage $i$. The argument for part (ii) is identical, with the bound replaced by
$\max_{k \geq 2} \sigma_k L((m_0 - \mu)/\sigma_k)$ since the
relevant variance at stage $j+1$ is $\sigma_{j+1}$ and expansion
can only occur for $j+1 \geq 2$.

At any state $(i, z, j)$ with $j < n$, the marginal value of
expansion by one unit is bounded above by the gross gain from
gaining access to box $j+1$, namely
\[
    \mathbb E\bigl[(X_{j+1} - z)^+\bigr]
    \;=\; \sigma \cdot L\!\left(\frac{z - \mu_{j+1}}{\sigma}\right).
\]
Since the guarantee $z$ is non-decreasing along the search path, $z
\geq m_0$ at every reachable state. 

Under non-degenerate normal rewards, at
every stage $i \in \{1, \ldots, j_0^* - 1\}$ there is positive
probability that the realized guarantee $z_i$ exceeds $r_i(j_0^*)$,
triggering stopping at stage $i$. Hence $\Pr(\tau^* < j_0^*) > 0$
when $j_0^* \geq 2$, and $\mathbb E[\tau^*] < j_0^*$.
\noindent (ii) The marginal value of expansion at state $(i, z, j)$ with $j < n$
is bounded above by
\[
    \mathbb E\bigl[(X_{j+1} - z)^+\bigr]
    \;=\; \sigma_{j+1} \cdot L\!\left(\frac{z - \mu}{\sigma_{j+1}}\right)
    \;\leq\; \sigma_{j+1} \cdot L\!\left(\frac{m_0 - \mu}{\sigma_{j+1}}\right),
\]
using $z \geq m_0$ and monotonicity of $L$. Since expansion can
only occur for $j+1 \geq 2$, the relevant upper bound is the
maximum over $k \geq 2$ of $\sigma_k \cdot L((m_0 - \mu)/\sigma_k)$.
Under the hypothesis, this maximum is dominated by $g(1)$, so
expansion is never strictly profitable. The remainder of the
argument is identical to the proof of Part (i).
\end{proof}

\noindent\textbf{Proof of Claim 2.}
\begin{proof}
(i) For the exponential,
    $I^T(z) = \int_z^\infty (x - z)\lambda e^{-\lambda x}\,dx
    = \frac{1}{\lambda}e^{-\lambda z}.$
    
\noindent(ii) For the Pareto with $z \geq x_m$,
\begin{align*}
    I^H(z) &= \int_z^\infty (x - z)
    \frac{ax_m^a}{x^{a+1}}\,dx \\
    &= \int_z^\infty \frac{ax_m^a}{x^a}\,dx
    - z\int_z^\infty \frac{ax_m^a}{x^{a+1}}\,dx \\
    & = \frac{x_m^a}{(a-1)z^{a-1}}.
\end{align*}
Substituting $x_m = \mu(a-1)/a$ gives the second expression. For $z < x_m$: $X \geq x_m > z$ a.s., so
$(X - z)^+ = X - z$ and $\mathbb{E}[(X-z)^+] = \mu - z$. For the exponential, $I(z)=\mu e^{-z/\mu}$. Setting $I(z)=c$ gives $e^{-z/\mu}=c/\mu$, hence $z=\mu\ln(\mu/c)$. For the Pareto, $I(z)=\frac{x_m^a}{(a-1)z^{a-1}}$ for $z \ge x_m$. Setting $I(z)=c$ gives $z^{a-1}=x_m^a/((a-1)c)$, which gives
$ z=x_m\left(\frac{x_m}{(a-1)c}\right)^{1/(a-1)}.$
Using $x_m=\mu(a-1)/a$ simplifies this to $x_m(\mu/(ac))^{1/(a-1)}$. The expansion thresholds follow by the same calculation with $c+\alpha$ in place of $c$. Since $c+\alpha>c$ and $I$ is strictly decreasing, $\bar z_3^{\mathrm{expand}}<\bar z_3$. The gap $r_2(3)-r_2(2)=\bar z_3-\bar z_3^{\mathrm{expand}}$
measures by how much the scope boundary lowers the DM's stopping threshold. It is the ``cost of incomplete scope'' expressed in units of the guarantee -- a DM at a scope boundary behaves as if the guarantee were lower by this amount.
\end{proof}

\bibliographystyle{apalike}
\bibliography{references}

\end{document}